\newcommand{\keywords}[1]{\par\addvspace\baselineskip
\noindent\keywordname\enspace\ignorespaces#1}
\begin{document}

\newcommand{\soqprogs}{\ensuremath{\soprogs_{\bbbq}}\xspace}
\newcommand{\soprogs}{\ensuremath{\textnormal{\textsf{Prog}}}\xspace}
\newcommand{\sonprogs}{\ensuremath{\textnormal{\textsf{ordProg}}}\xspace}
\newcommand{\sovars}{\ensuremath{\textnormal{\textsf{Var}}}\xspace}
\newcommand{\ttt}[1]{\textnormal{\texttt{#1}}}

\newcommand{\T}{\ensuremath{\textnormal T}\xspace}
\newcommand{\Tp}{\ensuremath{\T_p}\xspace}
\newcommand{\Tps}{\ensuremath{\Tp^*}\xspace}
\newcommand{\Exp}[2]{\ensuremath{\textnormal E_{#1}(#2)}\xspace}
\renewcommand{\Pr}{\textnormal{Pr}}

\newcommand{\toppr}{\ensuremath{\bot_{\textnormal{pr}}}}
\newcommand{\topnd}{\ensuremath{\bot_{\textnormal{nd}}}}
\newcommand{\topter}{\ensuremath{\bot_{\textnormal{ter}}}}

\newcommand{\Problem}[1]{\mathcal{#1}}
\newcommand{\cProblem}[1]{\overline{\text{\footnotesize{$\Problem{#1}$}}}}

\newcommand{\HP}{\Problem{H}\xspace}
\newcommand{\cHP}{\cProblem{H}\xspace}
\newcommand{\UHP}{\Problem{UH}\xspace}
\newcommand{\cUHP}{\cProblem{UH}\xspace}
\newcommand{\COF}{\Problem{COF}\xspace}
\newcommand{\cCOF}{\cProblem{COF}\xspace}
\newcommand{\AST}{\ensuremath{\Problem{AST}}\xspace}
\newcommand{\cAST}{\ensuremath{\cProblem{AST}}\xspace}
\newcommand{\UAST}{\ensuremath{\Problem{U\hspace{-1pt}AST}}\xspace}
\newcommand{\cUAST}{\ensuremath{\cProblem{U\hspace{-1pt}AST}}\xspace}
\newcommand{\PAST}{\ensuremath{\Problem{P\hspace{-2pt}AST}}\xspace}
\newcommand{\cPAST}{\ensuremath{\cProblem{P\hspace{-2pt}AST}}\xspace}
\newcommand{\UPAST}{\ensuremath{\Problem{UP\hspace{-2pt}AST}}\xspace}
\newcommand{\cUPAST}{\ensuremath{\cProblem{UP\hspace{-2pt}AST}}\xspace}
\newcommand{\EXP}{\ensuremath{\Problem{E\hspace{-1pt}XP}}}
\newcommand{\LEXP}{\ensuremath{\Problem{LE\hspace{-1pt}XP}}}
\newcommand{\REXP}{\ensuremath{\Problem{RE\hspace{-1pt}XP}}}

\newcommand{\leqm}{\mathrel{\:{\leq}_{\textnormal{m}}}}
\newcommand{\leqT}{\mathrel{\:{\leq}_{\textnormal{T}}}}
\newcommand{\equivm}{\mathrel{\:{\equiv}_{\textnormal{m}}}}
\newcommand{\equivT}{\mathrel{\:{\equiv}_{\textnormal{T}}}}

\newcommand{\LLL}[1]{\ensuremath{\textnormal{L}\big(#1\big)}\xspace}
\newcommand{\RRR}[1]{\ensuremath{\textnormal{R}\big(#1\big)}\xspace}

\mainmatter  % start of an individual contribution

% first the title is needed
\title{On the Hardness of Almost--Sure Termination\thanks{This research is funded by the Excellence Initiative of the German federal and state governments and by the EU FP7 MEALS project.}}

% a short form should be given in case it is too long for the running head
\titlerunning{On the Hardness of Almost--Sure Termination}

% the name(s) of the author(s) follow(s) next
%
% NB: Chinese authors should write their first names(s) in front of
% their surnames. This ensures that the names appear correctly in
% the running heads and the author index.
%
\author{Benjamin Lucien Kaminski%
\and Joost-Pieter Katoen}
\authorrunning{Benjamin Lucien Kaminski \and Joost-Pieter Katoen}
% (feature abused for this document to repeat the title also on left hand pages)

% the affiliations are given next; don't give your e-mail address
% unless you accept that it will be published
\institute{Software Modeling and Verification Group\\
RWTH Aachen University\\
\mailsa}

%
% NB: a more complex sample for affiliations and the mapping to the
% corresponding authors can be found in the file "llncs.dem"
% (search for the string "\mainmatter" where a contribution starts).
% "llncs.dem" accompanies the document class "llncs.cls".
%

\toctitle{On the Hardness of Almost--Sure Termination}
\tocauthor{Benjamin Lucien Kaminski, Joost-Pieter Katoen}
\maketitle

\begin{abstract}
This paper considers the computational hardness of computing expected outcomes and deciding (universal) (positive) almost--sure termination of probabilistic programs.
It is shown that computing lower and upper bounds of expected outcomes is $\Sigma_1^0$-- and $\Sigma_2^0$--complete, respectively.
Deciding (universal) almost--sure termination as well as deciding whether the expected outcome of a program equals a given rational value is shown to be $\Pi^0_2$--complete.
Finally, it is shown that deciding (universal) positive almost--sure termination is $\Sigma_2^0$--complete ($\Pi_3^0$--complete).
\keywords{probabilistic programs $\cdot$ expected outcomes $\cdot$ almost--sure termination $\cdot$ positive almost--sure termination $\cdot$ computational hardness}
\end{abstract}

\allowdisplaybreaks

\section{Introduction}

Probabilistic programs~\cite{DBLP:journals/jcss/Kozen81} are imperative programs with the ability to toss a (possibly) biased coin and proceed their execution depending on the outcome of the coin toss.
They are used in randomized algorithms, in security to describe cryptographic constructions (such as randomized encryption) and security experiments~\cite{DBLP:journals/toplas/BartheKOB13}, and in machine learning to describe distribution functions that are analyzed using Bayesian inference~\cite{DBLP:journals/corr/BorgstromGGMG13}.
Probabilistic programs are typically just a small number of lines, but hard to understand and analyze, let alone algorithmically.
This paper considers the computational hardness of two main analysis problems (and variations thereof) for probabilistic programs:
\begin{enumerate}
\item Computing expected outcomes: Is the expected outcome of a program variable smaller than, equal to, or larger than a given rational number?

\item Deciding [universal] (positive) almost--sure termination: Does a program terminate [on all inputs] with probability one (within an expected finite number of computation steps)?
\end{enumerate}
%Several results on computational hardness in connection with analyzing probabilistic programs have been reported in the literature, like non--recursive--enu{\-}me{\-}ra{\-}bi{\-}li{\-}ty results for probabilistic rewriting logic \cite{bournez} and decidability results for restricted probabilistic programming languages \cite{murawski}.
The first analysis problem is related to determining weakest pre--expectations of probabilistic programs~\cite{mciver,gretz}.
Almost--sure termination is an active field of research~\cite{luis}.
A lot of work has been done towards automated reasoning for almost--sure termination.
For instance, \cite{sneyers} gives an overview of some particularly interesting examples of probabilistic logical programs and the according intuition for proving almost--sure termination. 
Arons \emph{et al.}~\cite{arons2003parameterized} reduce almost--sure termination to termination of non--deterministic programs by means of a planner.
This idea has been further exploited and refined into a pattern--based approach with prototypical tool support~\cite{esparza}. 

Despite the existence of several (sometimes automated) approaches to tackle almost--sure termination, most authors claim that it must intuitively be harder than the termination problem for ordinary programs.
To mention a few, Morgan~\cite{onlymorgan} remarks that while partial correctness for small--scale examples is not harder to prove than for ordinary programs, the case for total correctness of a probabilistic loop must be harder to analyze.
Esparza \emph{et al.}~\cite{esparza} claim that almost--sure termination must be harder to decide than ordinary termination since for the latter a topological argument suffices while for the former arithmetical reasoning is needed. 
The computational hardness of almost--sure termination has however received scant attention.
As a notable exception, \cite{tiomkin}~establishes that deciding almost--sure termination of certain concurrent probabilistic programs is in $\Pi_2^0$.

%Regarding the analysis of probabilistic programs there seems to be no clear understanding of the computational hardness of solving these problems:
%For instance Morgan states that ``probabilistic reasoning for partial correctness [\,\dots] is not much more complex than standard reasoning"~\cite{onlymorgan}.
%On the other hand Esparza \emph{et al.} state that ``termination is a purely topological property [\,\dots], but almost--sure termination is not" and furthermore ``proving almost--sure termination requires arithmetic reasoning not offered by termination provers"~\cite{esparza}.
%There are results on computational hardness in connection with probabilistic programs, like non--recursive--enumerability results for probabilistic rewriting logic~\cite{bournez} and decidability results for restricted probabilistic programming languages~\cite{murawski}.
%Another more closely related result establishes that almost--sure termination of certain concurrent probabilistic programs is at most $\Pi_2^0$--hard~\cite{tiomkin}.
%However,---to the best of our knowledge---there appears to be yet no \emph{precise} classification of the level of arithmetical reasoning needed for computing expected outcomes or deciding (universal) (postive) almost--sure termination for pGCL (\cite{mciver}) programs.

In this paper, we give precise classifications of the level of arithmetical reasoning that is needed to decide the aforementioned analysis problems by establishing the following results:
We first show that computing lower bounds on the expected outcome of a program variable $v$ after executing a probabilistic program $P$ on a given input $\eta$ is $\Sigma_1^0$--complete and therefore arbitrarily close approximations from below are computable.
Computing upper bounds, on the other hand, is shown to be $\Sigma_2^0$--complete, thus arbitrarily close approximations from above are not computable in general.
Deciding whether an expected outcome equals some rational is shown to be $\Pi_2^0$--complete.

For the second analysis problem---almost--sure termination---we obtain that deciding almost--sure termination of probabilistic program $P$ on a given input $\eta$ is $\Pi_2^0$--complete.
While for ordinary programs we have a complexity leap when moving from the non--universal to the universal halting problem, we establish that \emph{this is not the case for probabilistic programs}: Deciding universal a.s\ termination turns out to be $\Pi_2^0$--complete too.
The case for \emph{positive} almost--sure termination is different however:
While deciding (non--universal) positive almost--sure termination is $\Sigma_2^0$--complete, we show that universal positive almost--sure termination is $\Pi_3^0$--complete.

%Our hardness results are established by reductions from the universal halting problem for ordinary, i.e., non--probabilistic programs.
%Remarkably the probabilistic programs we use in our reduction obey a certain syntactic schema, namely that the randomization and the actual computation are strictly separated.
%We interpret this as possible evidence for the existence of a ``normal form" for probabilistic programs.

\section{Preliminaries}

As indicated, our hardness results will be stated in terms of levels in the arithmetical hierarchy---a concept we briefly recall:

\begin{definition}[Arithmetical Hierarchy \textnormal{\textbf{\cite{kleeneNF,odifreddi1}}}]
\label{remarithmetic}
For every $n \in \bbbn$, the \textbf{class $\boldsymbol{\Sigma_n^0}$} is defined as $\Sigma_n^0 = \big\{ \Problem A ~\big|~ \Problem A = \big\{  x ~\big|~ \exists y_1\, \forall y_2\, \exists y_3\, \cdots\, \exists/\forall y_n\colon~ ( x,\, y_1,\, y_2,\allowbreak\, y_3,\, \ldots,\allowbreak\, y_n) \in \Problem R\big\},\, \Problem R$ \textnormal{is a decidable relation}$\big\}$, the \textbf{class $\boldsymbol{\Pi_n^0}$} is defined as $\Pi_n^0 = \big\{ \Problem A ~\big|~ \Problem A = \big\{ x ~\big|~ \forall y_1\, \exists y_2\, \forall y_3\, \cdots\, \exists/\forall y_n\colon~ ( x,\, y_1,\, y_2,\, y_3,\, \ldots,\, y_n) \in \Problem R \big\},\, \Problem R$ \textnormal{is} \textnormal{a} \textnormal{decidable} \textnormal{re}{\-}\textnormal{la}{\-}\textnormal{tion}$\big\}$ and the \textbf{class $\boldsymbol{\Delta_n^0}$} is defined as $\Delta_n^0 = \Sigma_n^0 \cap \Pi_n^0$.
Note that we require that the values of the variables are drawn from a recursive domain.
\emph{Multiple consecutive quantifiers} \emph{of the same type} can be contracted to \emph{one} quantifier of that type, so the number $n$ really refers to the number of necessary \emph{quantifier alternations} rather than to the number of quantifiers used.
A set $\Problem A$ is called \textbf{arithmetical}, iff $\Problem A \in \Gamma_n^0$, for $\Gamma \in \{\Sigma,\, \Pi,\, \Delta\}$ and $n \in \mathbb N$.
The arithmetical sets form a strict hierarchy, i.e.\ $\Delta_n^0 \subset \Gamma_n^0 \subset \Delta_{n+1}$ and $\Sigma_n^0 \neq \Pi_n^0$ holds for $\Gamma \in \{\Sigma,\, \Pi\}$ and $n \geq 1$.
%\begin{align*}
%\begin{array}{r}
%\Sigma_n^0\\\\\\\\
%\Pi_n^0\vphantom{\Big(}
%\end{array} \begin{array}{c}\mathrel{\rotatebox{-35}{\text{\LARGE $\subset$}}}\\\\\mathrel{\rotatebox{+35}{\text{\LARGE $\subset$}}}\end{array}~ \Delta_{n+1}^0 \,\begin{array}{c}\mathrel{\rotatebox{+35}{\text{\LARGE $\subset$}}}\\\vspace{-0.6em}\\\mathrel{\rotatebox{-35}{\text{\LARGE $\subset$}}}\end{array} \begin{array}{l}
%\Sigma_{n+1}^0\\\\\\\\
%\Pi_{n+1}^0\vphantom{\Big(}
%\end{array}
%\end{align*}
%holds for every $n \geq 1$, thus the arithmetical sets form a strict hierarchy.
Furthermore, note that $\Sigma_0^0 = \Pi_0^0 = \Delta_0^0 = \Delta_1^0$ is exactly the class of the decidable sets and $\Sigma_1^0$ is exactly the class of the recursively enumerable sets.
\end{definition}
Next, we recall the concept of many--one reducibility and completeness:
\begin{definition}[Many--One Reducibility and Completeness \textnormal{\textbf{\cite{odifreddi1,post44,davis}}}]
Let $\Problem A,\, \Problem B$ be arithmetical sets and let $X$ be some appropriate universe such that $\Problem A,\Problem B \subseteq X$.
$\Problem A$ is called \textbf{many--one--reducible} to $\Problem B$, denoted $\boldsymbol{\Problem A \leqm \Problem B}$, iff there exists a computable function $f\colon X \rightarrow X$, such that $\forall\, {x} \in X\colon \big( x \in \Problem A \Longleftrightarrow f( x) \in \Problem B\big)$.
If $f$ is a function such that $f$ many--one reduces $\Problem A$ to $\Problem B$, we denote this by $\boldsymbol{f\colon \Problem A \leqm \Problem B}$.
Note that $\leqm$ is transitive. 

$\Problem A$ is called \textbf{$\boldsymbol{\Gamma_n^0}$--complete}, for $\Gamma \in \{\Sigma,\, \Pi,\, \Delta\}$, iff both $\Problem A \in \Gamma_n^0$ and $\Problem A$ is \textbf{$\boldsymbol{\Gamma_n^0}$--hard}, meaning $\Problem C \leqm \Problem A$, for any set $\Problem C \in \Gamma_n^0$. Note that if $\Problem A$ is $\Gamma_n^0$--complete and $\Problem A \leqm \Problem B$, then $\Problem B$ is necessarily $\Gamma_n^0$--hard.
Furthermore, note that if $\Problem A$ is $\Sigma_n^0$--complete, then $\Problem{A} \in \Sigma_n^0\setminus \Pi_n^0$. Analogously if $\Problem A$ is $\Pi_n^0$--complete, then $\Problem{A} \in \Pi_n^0\setminus \Sigma_n^0$.
\end{definition}
%
%
%An important fact about $\Sigma_n^0$-- and $\Pi_n^0$--complete sets is that they are in some sense the most complicated sets in $\Sigma_n^0$ and $\Pi_n^0$, respectively:
%Formally, we have that if $\Problem A$ is $\Sigma_n^0$--complete, then $\Problem{A} \in \Sigma_n^0\setminus \Pi_n^0$. Analogously if $\Problem A$ is $\Pi_n^0$--complete, then $\Problem{A} \in \Pi_n^0\setminus \Sigma_n^0$~\cite{davis}.

\section{Probabilistic Programs}

In order to speak about probabilistic programs and the computations performed by such programs, we briefly introduce the syntax and semantics we use:
\begin{definition}[Syntax]
Let \sovars be the set of program variables.
The \textbf{set $\boldsymbol{\textsf{\textbf{Prog}}}$ of probabilistic programs} adheres to the following grammar:
\begin{align*}
\soprogs ~~\longrightarrow~~ &v\mathrel{\ttt{:=}}e ~|~ \soprogs\ttt{;}\: \soprogs ~|~ \ttt{\{}\soprogs\ttt{\}} \:[p]\: \ttt{\{}\soprogs\ttt{\}} ~|~ \ttt{WHILE}\:\ttt{($b$)}\:\ttt{\{} \soprogs \ttt{\}}~,
\end{align*}
where $v \in \sovars$, $e$ is an arithmetical expression over \sovars, $p \in [0,\, 1] \subseteq \bbbq$, and $b$ is a Boolean expression over arithmetic expressions over $\sovars$. 
We call the set of programs that \emph{do not} contain any probabilistic choices the \textbf{set of ordinary programs} and denote this set by \textsf{\textbf{ordProg}}.
\end{definition}
The presented syntax is the one of the fully probabilistic\footnote{Fully probabilistic programs may contain probabilistic but no non--deterministic choices.} fragment of the probabilistic guarded command language (pGCL) originally due to McIver  and Morgan~\cite{mciver}.
We omitted \texttt{skip}--, \texttt{abort}--, and \texttt{if}--statements, as those are syntactic sugar. 
While assignment, concatenation, and the while--loop are standard programming constructs, $\ttt{\{}P_1\ttt{\}} \:[p]\: \ttt{\{}P_2\ttt{\}}$ denotes a probabilistic choice between programs $P_1$ (with probability $p$) and $P_2$ (with probability $1-p$).
An operational semantics for pGCL programs is given below:
\begin{definition}[Semantics]
\label{def:semantics}
Let the set of variable valuations be denoted by $\mathbb V = \{\eta ~|~ \eta\colon \sovars \rightarrow \bbbq^+\}$, let the set of program states be denoted by $\bbbs = \big(\soprogs \cup \{{\downarrow}\}\big) \times \mathbb V \times I \times \{L,\, R\}^*$, for $I = [0,\, 1] \cap \bbbq^+$, let $\llbracket e \rrbracket_\eta$ be the evaluation of the arithmetical expression $e$ in the variable valuation $\eta$, and analogously let $\llbracket b \rrbracket_\eta$ be the evaluation of the Boolean expression $b$. 
Then the \textbf{semantics of probabilistic programs} is given by the smallest relation ${\vdash} \subseteq \bbbs \times \bbbs$ which satisfies the following inference rules:
\normalsize\begin{align*}
 (\textnormal{assign})&\frac{}{\langle v\textnormal{\texttt{ := }} e,\, \eta,\, a,\, \theta\rangle ~\vdash~  \langle {\downarrow},\, \eta[v \mapsto \max \{\llbracket e \rrbracket_\eta,\, 0\}],\, a,\, \theta\rangle}\\[0.5\baselineskip]
(\textnormal{concat1})&\frac{\langle P_1,\, \eta,\, a,\, \theta\rangle ~\vdash~  \langle P_1',\, \eta',\, a',\, \theta'\rangle}{\langle P_1\textnormal{\texttt{;}}\:P_2,\, \eta,\, a,\, \theta\rangle ~\vdash~  \langle P_1'\textnormal{\texttt{;}}\: P_2,\, \eta',\, a',\, \theta'\rangle}\\[0.5\baselineskip]
(\textnormal{concat2})&\frac{}{\langle {\downarrow}\textnormal{\texttt{;}}\:P_2,\, \eta,\, a,\, \theta\rangle ~\vdash~  \langle P_2,\, \eta,\, a,\, \theta\rangle}\\[0.5\baselineskip]
(\textnormal{prob1})&\frac{}{\langle \{P_1\}\:[p]\:\{P_2\},\, \eta,\, a,\, \theta\rangle ~\vdash~  \langle P_1,\, \eta,\, a \cdot p,\, \theta\cdot L \rangle}\\[0.5\baselineskip]
(\textnormal{prob2})&\frac{}{\langle \{P_1\}\:[p]\:\{P_2\},\, \eta,\, a,\, \theta\rangle ~\vdash~  \langle P_2,\, \eta,\, a \cdot (1 - p),\, \theta \cdot R\rangle}\\[0.5\baselineskip]
(\textnormal{while1})&\frac{\llbracket b \rrbracket_\eta = \textnormal{True}}{\langle \textnormal{\texttt{WHILE}}\:\ttt{(} b \textnormal{\texttt{)}}\:\texttt{\{} P \textnormal{\texttt{\}}},\, \eta,\, a,\, \theta\rangle ~\vdash~  \langle P\textnormal{\texttt{;}}\: \ttt{WHILE}\:\ttt{(} b \textnormal{\texttt{)}}\:\ttt{\{} P \textnormal{\texttt{\}}},\, \eta,\, a,\, \theta\rangle}\\[0.5\baselineskip]
(\textnormal{while2})&\frac{\llbracket b \rrbracket_\eta = \textnormal{False}}{\langle \textnormal{\texttt{WHILE}}\:\ttt{(} b \textnormal{\texttt{)}}\:\ttt{\{} P \textnormal{\texttt{\}}},\, \eta,\, a,\, \theta\rangle ~\vdash~  \langle {\downarrow},\, \eta,\, a,\, \theta\rangle}
\end{align*}\normalsize
%The rules for concatenation and while--loops are standard and therefore omitted here.
%For a complete list of rules, see Appendix \ref{app:rules}.
We use $\sigma  \vdash^k \tau$ in the usual sense.
%Furthermore we write $\sigma \vdash_{(\textnormal{name})} \tau$ if $\tau$ is inferred by the use of the \textnormal{(name)}--rule (for $\textnormal{name} \in \{\textnormal{assign},\, \textnormal{concat1},\, \ldots\})$.
\end{definition}
The semantics is mostly straightforward except for two features: in addition to the program that is to be executed next and the current variable valuation, each state also stores a sequence $\theta$ that encodes which probabilistic choices were made in the past ($L$eft or $R$ight) as well as the probability $a$ that those choices were made.
The graph that is spanned by the $\vdash$--relation can be seen as an unfolding of the Markov decision process semantics for pGCL provided by Gretz \emph{et al.}~\cite{gretz} when restricting oneself to fully probabilistic programs.

\section{Expected Outcomes and Termination Probabilities}

In this section we formally define the notion of an expected outcome as well the notion of (universal) (positive) almost--sure termination.
We start by investigating how state successors can be computed.

It is a well--known result due to Kleene that for any ordinary program $P$ and a state $\sigma$ the $k$-th successor of $\sigma$ with respect to $\vdash$ is unique and computable.
If, however, $P$ is a probabilistic program containing probabilistic choices, the $k$-th successor of a state need not be unique, because at various points of the execution the program must choose a left or a right branch with some probability. 
However, if we resolve those choices by providing a sequence of symbols $w$ over the alphabet $\{L,\, R\}$ that encodes for all probabilistic choices which occur whether the $L$eft or the $R$ight branch shall be chosen at a branching point, we can construct a computable function that computes a unique $k$-th successor. 
Notice that for this purpose a sequence of finite length is sufficient.
We obtain the following:
\begin{proposition}[The State Successor Function]
\label{statesucc}
Let $\bbbs_\bot = \bbbs \cup \{\bot\}$.
There exists a total computable function $\T\colon \bbbn \times \bbbs \times \{L,\, R\}^* \rightarrow \bbbs_\bot$, such that for $k \geq 1$
\normalsize
\belowdisplayskip=0pt\begin{align*}
\T_0(\sigma,\, w) ~&=~ \begin{cases}
\sigma, &\textnormal{if }w = \varepsilon,\\
\bot, &\textnormal{otherwise,}
\end{cases}\\
\T_{k}(\sigma,\, w) ~&=~ \begin{cases}
T_{k-1}(\tau,\, w'), &\textnormal{if }\sigma = \langle P,\, \eta,\, a,\, \theta\big\rangle \vdash \langle P',\, \eta',\, a',\, \theta \cdot b\big\rangle = \tau,\\
& ~~~  \textnormal{with } w = b\cdot w' \textnormal{ and } b \in \{L,\, R,\, \varepsilon\},\\
\bot &\textnormal{otherwise.}
\end{cases}
\end{align*}\normalsize
\end{proposition}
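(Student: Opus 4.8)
The plan is to construct the function $\T$ by primitive recursion on $k$, with the base case and recursion step already laid out in the statement, so that the real work is to verify that each clause is effectively computable and that the whole scheme is well defined. First I would argue that the one-step relation $\vdash$ is decidable in the following strong sense: given a state $\sigma = \langle P,\eta,a,\theta\rangle$ with $P \neq {\downarrow}$, one can (i) decide by inspection of the syntax tree of $P$ which of the seven inference rules applies --- this is a straightforward structural case analysis on whether $P$ is an assignment, a concatenation (and whether its left component is ${\downarrow}$), a probabilistic choice, or a \ttt{WHILE} (and, using $\llbracket b\rrbracket_\eta$, whether the guard holds); (ii) in the deterministic cases (assign, concat1, concat2, while1, while2) compute the unique successor $\tau$, noting that $\llbracket e\rrbracket_\eta$, $\llbracket b\rrbracket_\eta$, the update $\eta[v\mapsto\max\{\llbracket e\rrbracket_\eta,0\}]$, and rational arithmetic on $a$ are all computable; and (iii) in the probabilistic case read the next symbol $b$ of $w$ and select \textnormal{prob1} if $b = L$, \textnormal{prob2} if $b = R$, producing the successor with updated probability $a\cdot p$ resp.\ $a\cdot(1-p)$ and history $\theta\cdot b$.

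Next I would assemble these pieces into the recursion exactly as displayed. The key observation making the definition consistent is that in each $\vdash$-step the history component changes from $\theta$ to $\theta\cdot b$ with $b\in\{L,R,\varepsilon\}$ --- deterministic rules append $\varepsilon$ (i.e.\ leave $\theta$ unchanged), the two probabilistic rules append one genuine symbol --- so peeling off the matching prefix $b$ from $w$ and recursing on $w'$ with $w = b\cdot w'$ is meaningful: at a deterministic step we consume nothing from $w$, at a probabilistic step we consume exactly the symbol that dictates the branch, and when no symbol is available (or the wrong symbol is present, or $P = {\downarrow}$ but $w\neq\varepsilon$, or any other mismatch) we fall through to $\bot$. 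I would then check totality: for every $(k,\sigma,w)$ the computation performs at most $k$ recursive calls, each of which terminates because the one-step successor test terminates, and the final call returns either a state in $\bbbs$ or $\bot$; hence $\T$ is total. Since $\bot$ is absorbing in the sense that $\T_{k-1}(\bot,w')$ is undefined by the pattern, I would add the convention (or check that the ``otherwise'' clause covers it) that once $\bot$ is reached it propagates, so that $\T$ is well defined on all of $\bbbn\times\bbbs\times\{L,R\}^*$ and lands in $\bbbs_\bot$.

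Finally I would note that computability of $\T$ follows formally from the fact that it is defined from decidable predicates and computable functions by the scheme of primitive recursion with a bounded search of length $\le k$, which is closed within the computable functions; alternatively one can just exhibit the obvious algorithm (simulate up to $k$ steps of $P$ on $\eta$, consulting $w$ whenever a probabilistic choice is hit, and output $\bot$ on any inconsistency) and appeal to the Church--Turing thesis. I expect the main --- though still minor --- obstacle to be purely bookkeeping: getting the interaction between the history string $\theta$ and the oracle string $w$ exactly right, i.e.\ being careful that $w$ is consumed \emph{only} at \textnormal{prob1}/\textnormal{prob2} steps and that all the degenerate cases ($w$ too short, $w$ too long, symbol/branch mismatch, stepping from ${\downarrow}$) are uniformly funnelled into the $\bot$ case so that uniqueness of the $k$-th successor genuinely holds. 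Everything else is routine, as it is essentially Kleene's classical ``the $k$-step successor of a program state is computable'' argument with the single twist that nondeterministic probabilistic branches are resolved by the auxiliary word $w$.
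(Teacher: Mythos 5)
Your proposal is correct and matches the paper's (implicit) justification: the paper offers no explicit proof, relying on the remark that Kleene's classical ``the $k$-th $\vdash$-successor is computable'' argument carries over once the probabilistic branches are resolved by the word $w$, which is exactly the simulation-by-primitive-recursion argument you spell out. Your only superfluous worry is the $\bot$-propagation convention: the recursive call $\T_{k-1}(\tau,w')$ is only ever made on a genuine successor $\tau\in\bbbs$, so $\bot$ arises solely as an output of the ``otherwise'' clause and never needs to be fed back in.
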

So $\T_k(\sigma,\, w)$ returns a successor state $\tau$, if $\sigma \vdash^k \tau$, whereupon exactly $|w|$ inferences must use the \textnormal{(prob1)}-- or the \textnormal{(prob2)}--rule and those probabilistic choices are resolved according to $w$. 
Otherwise $\T_k(\sigma,\, w)$ returns $\bot$.
Note in particular that for both the inference of a terminal state $\langle{\downarrow},\, \eta,\, a,\, \theta\rangle$ within less than $k$ steps as well as the inference of a terminal state through less or more than $|w|$ probabilistic choices, the calculation of $\T_k(\sigma,\, w)$ will result in $\bot$.
In addition to $\T$, we will need two more computable operations for expressing expected outcomes, termination probabilities, and expected runtimes:
\begin{proposition}
\label{prop:statesucc}
There exist two total computable functions $\alpha\colon \bbbs_\bot \rightarrow \bbbq^+$ and $\wp\colon \bbbs_\bot \times \sovars  \rightarrow \bbbq^+$, such that
\normalsize\begin{align*}
\alpha(\sigma) ~~&{=}~\begin{cases}
a,&\textnormal{if } \sigma = \langle {\downarrow},\, \underline{~\:},\,  a,\, \underline{~\:}\rangle\\
0,&\textnormal{otherwise,}
\end{cases}%\\[.5\baselineskip]
\quad~\:~\wp(\sigma,\,  v) ~~{=}~\begin{cases}
\eta(v) \cdot a,&\textnormal{if } \sigma = \langle {\downarrow},\, \eta,\,  a,\, \underline{~\:}\rangle\\
0,&\textnormal{otherwise,}
\end{cases}
\end{align*}\normalsize
where $\underline{~\:}$ represents an arbitrary value.
\end{proposition}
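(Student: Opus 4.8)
\emph{Proof proposal.} The plan is to observe that $\alpha$ and $\wp$ are, up to a decidable case distinction, nothing but decoding operations followed by one multiplication of non-negative rationals, and hence computable.

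First I would fix a Gödel numbering of $\bbbs_\bot$. A non-$\bot$ state $\langle P,\, \eta,\, a,\, \theta\rangle$ consists of a program term $P$ (or the terminal marker ${\downarrow}$), a valuation $\eta$, a probability $a \in [0,1]\cap\bbbq^+$, and a word $\theta \in \{L,R\}^*$, all of which are finite syntactic objects: $a$ can be coded as a pair of naturals, $\theta$ as a finite string over $\{L,R\}$, and $P$ by its parse tree. The one point needing care is $\eta$: taken literally, $\mathbb V = \{\eta \mid \eta\colon\sovars\to\bbbq^+\}$ is uncountable and admits no numbering, but every state relevant to our development is reachable from an initial configuration $\langle P,\, \eta,\, 1,\, \varepsilon\rangle$ with $\eta$ finitely represented, and by inspection of the rules of Definition~\ref{def:semantics} each inference alters $\eta$ at only a single variable; so without loss of generality we code valuations as finite lists of variable/value pairs with default value $0$, and then every $\langle P,\, \eta,\, a,\, \theta\rangle$ receives a natural-number code, with $\bot$ getting its own. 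This is in any case exactly the encoding already tacitly used to make Proposition~\ref{statesucc} meaningful.

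Second, the predicate ``$\sigma$ is terminal'', i.e.\ $\sigma = \langle{\downarrow},\, \eta,\, a,\, \theta\rangle$ for some $\eta, a, \theta$, is decidable, since it only requires checking that the code of $\sigma$ is well formed and has ${\downarrow}$ as its first component. When $\sigma$ is terminal, the components $a$ and $\eta$ are recovered by the (primitive recursive) projections of the pairing scheme, and $\eta(v)$ for a given $v\in\sovars$ by a finite lookup in the list $\eta$, returning $0$ if $v$ does not occur. Since multiplication on $\bbbq^+$ is computable, setting $\alpha(\sigma) = a$ and $\wp(\sigma,\, v) = \eta(v)\cdot a$ on terminal $\sigma$, and $\alpha(\sigma) = \wp(\sigma,\, v) = 0$ otherwise (in particular on $\bot$), defines total computable functions of exactly the stated form. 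There is no genuine obstacle in the argument; the only thing one must not skip over is the finite representability of the valuations discussed above, which is what makes a Gödel numbering of $\bbbs_\bot$ (and hence talk of computable functions on it) legitimate in the first place.
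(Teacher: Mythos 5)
Your argument is correct and matches the paper's treatment: the paper states this proposition without proof, regarding $\alpha$ and $\wp$ as obviously computable case distinctions plus projections and one multiplication, which is exactly what you spell out. Your remark on the finite representability of valuations (so that states admit a G\"odel numbering at all) is a reasonable clarification of what the paper leaves tacit, but it does not change the substance of the argument.
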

The function $\alpha$ takes a state $\sigma$ and returns the probability of reaching $\sigma$.
The function $\wp$ takes a state $\sigma$ and a variable $v$ and returns the probability of reaching $\sigma$ multiplied with the value of $v$ in the state $\sigma$.
Both functions do that only if the provided state $\sigma$ is a terminal state. 
Otherwise they return 0.
Based on the above notions, we now definie expected outcomes, termination probabilities and expected times until termination:
\begin{definition}[Expected Outcome, Termination Probability, and Expected Time until Termination]
\label{exp}
Let $P \in \soprogs$, $\eta \in \mathbb V$, $v \in \sovars$, $\sigma_{P,\eta} = \langle P,\, \eta,\, 1,\, \varepsilon\rangle$, and for a finite alphabet $A$ let $A^{{}\leq k} = \bigcup_{i=0}^k A^i$.
Then
\begin{enumerate}
\item the \textbf{expected outcome} of $v$ after executing $P$ on $\eta$, denoted $\boldsymbol{\textnormal{\textbf{E}}_{P,\eta}(v)}$, is
\abovedisplayskip=5pt%\belowdisplayskip=0pt
\begin{align*}
\Exp{P,\eta}{v} ~=~ \sum_{k = 0}^{\infty} ~ \sum_{w \in \{L,\, R\}^{{}\leq k}} ~ \wp\big(\T_{k}(\sigma_{P,\eta},\, w),\, v \big)~,
\end{align*}\normalsize
\item the \textbf{probability that $\boldsymbol P$ terminates} on $\eta$, denoted $\boldsymbol{\textnormal{\textbf{Pr}}_{P,\eta}({\downarrow})}$, is
\abovedisplayskip=5pt
\begin{align*}
\Pr_{P,\eta}({\downarrow}) ~=~ \sum_{k = 0}^{\infty} ~ \sum_{w \in \{L,\, R\}^{{}\leq k}} ~ \alpha\big(\T_{k}(\sigma_{P,\eta},\, w) \big)~,
\end{align*}
\item the \textbf{expected time until termination of $\boldsymbol P$} on $\eta$, denoted $\boldsymbol{\textnormal{\textbf{E}}_{P,\eta}({\downarrow})}$, is
\abovedisplayskip=5pt
\begin{align*}
\Exp{P,\eta}{\downarrow} ~=~ \sum_{k = 0}^{\infty} \left(1-\sum_{w \in \{L,\, R\}^{{}\leq k}} ~ \alpha\big(\T_{k}(\sigma_{P,\eta},\, w) \big)\right)~.
\end{align*}\normalsize
\end{enumerate}
\end{definition}
The expected outcome $\Exp{P,\eta}{v}$ as defined here coincides with the weakest pre--expectation $\mathit{wp}.P.v\,(\eta)$ \`{a} la McIver and Morgan \cite{mciver} for fully probabilistic programs.
In the above definition for $\Exp{P,\eta}{v}$, we sum over all possible numbers of inference steps $k$ and sum over all possible sequences from length $0$ up to length $k$ for resolving all probabilistic choices. 
Using $\wp$ we filter out the terminal states $\sigma$ and sum up the values of $\wp(\sigma,\, v)$.

For the termination probability $\Pr_P({\downarrow})$, we basically do the same but we merely sum up the probabilities of reaching final states by using $\alpha$ instead of $\wp$. 

For the expected time until termination $\Exp{P,\eta}{\downarrow}$, we go along the lines of~\cite{luis}: It is stated there that the expected time until termination of $P$ on $\eta$ can be expressed as $\sum_{k=0}^{\infty} \Pr(\textnormal{``}P$ runs for more than $k$ steps on $\eta\textnormal{"}) = \sum_{k=0}^{\infty}\big( 1 - \Pr(\textnormal{``}P$ terminates within $k$ steps on $\eta\textnormal{"})\big)$.
We have expressed the latter in our set--up.

In order to investigate the complexity of calculating $\Exp{P,\eta}{v}$, we define three sets: 
$\LEXP$, which relates to the set of rational lower bounds of $\Exp{P,\eta}{v}$, $\REXP$, which relates to the set of rational upper bounds, and $\EXP$ which relates to the value of $\Exp{P,\eta}{v}$ itself:
\begin{definition}[$\boldsymbol\LEXP$, $\boldsymbol\REXP$, and $\boldsymbol\EXP$]
The sets $\LEXP,\REXP,\EXP \subset \soprogs \times \mathbb V \times \sovars \times \bbbq^+$ are defined as $(P,\, \eta,\, v,\, q) \in \LEXP$ iff $q < \Exp{P,\eta}{v}$, $(P,\, \eta,\, v,\, q) \in \REXP$ iff $q > \Exp{P,\eta}{v}$, and
$(P,\, \eta,\, v,\, q) \in \EXP$ iff $q = \Exp{P,\eta}{v}$.
\end{definition}
Regarding the termination probability of a probabilistic program, the case of almost--sure termination is of special interest:
We say that a program $P$ \emph{terminates almost--surely} on input $\eta$ iff $P$ terminates on $\eta$ with probability 1.
Furthermore, we say that $P$ \emph{terminates positively almost--surely} on $\eta$ iff the expected time until termination of $P$ on $\eta$ is finite.
Lastly, we say that $P$ terminates \emph{universally} (positively) almost--surely, if it does so on all possible inputs $\eta$.
The problem of (universal) almost--sure termination can be seen as the probabilistic counterpart to the (universal) halting problem for ordinary programs.

In the following, we formally define the according problem sets:
\begin{definition}[Almost--Sure Termination Problem Sets]
The sets $\boldsymbol\AST$, $\boldsymbol\PAST$, $\boldsymbol\UAST$, and $\boldsymbol\PAST$ are defined as follows:
\small\begin{align*}
(P,\, \eta) \in \AST ~\Longleftrightarrow~ \Pr_{P,\eta}({\downarrow}) = 1~~~~~~~\qquad
&(P,\, \eta) \in \PAST ~\Longleftrightarrow~ \Exp{P,\eta}{\downarrow} < \infty\\
P \in \UAST ~\Longleftrightarrow~ \forall \eta\colon (P,\, \eta) \in \AST\qquad
&~~~P \in \UPAST ~\Longleftrightarrow~ \forall \eta\colon (P,\, \eta) \in \PAST
\end{align*}\normalsize
Notice that both $\PAST \subset \AST$ and $\UPAST \subset \UAST$ hold.
\end{definition}

\section{The Hardness of Computing Expected Outcomes}
%We now have all definitions available to begin the investigation of the computational hardness of computing expected outcomes.
In this section we investigate the computational hardness of deciding the sets $\LEXP$, $\REXP$, and $\EXP$.
The first fact we establish is the $\Sigma_1^0$--completeness of $\LEXP$. This result is established by reduction from the (non--universal) halting problem for ordinary programs:
\begin{theorem}[The Halting Problem \textnormal{\textbf{\cite{odifreddi2}}}]
\label{HPcomp}
The \textbf{halting problem} is a subset $\boldsymbol\HP \subset \sonprogs \times \mathbb{V}$, which is characterized as $(P,\, \eta) \in \HP$ iff \:$\exists k \: \exists \eta'\colon \T_k(\sigma_{P,\eta},\allowbreak\, \varepsilon) = \langle {\downarrow},\, \eta',\, 1,\, \varepsilon\rangle$.
Let $\boldsymbol\cHP$ denote the \textbf{complement of the halting problem}, i.e.\ $\cHP = (\sonprogs \times \mathbb V) \setminus \HP$.
$\HP$ is $\Sigma_1^0$--complete and $\cHP$ is $\Pi_1^0$--complete.
\end{theorem}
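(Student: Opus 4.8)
\emph{Overview of the plan.} The plan is to settle the two membership assertions by reading the quantifier structure off the definitions, and the two hardness assertions by the classical undecidability argument recast in our syntax; the claims about $\cHP$ then follow from those about $\HP$ by complementation, using the same reduction function.

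\emph{Membership.} For $\HP\in\Sigma_1^0$ note that the defining condition $\exists k\,\exists\eta'\colon\T_{k}(\sigma_{P,\eta},\, \varepsilon)=\langle{\downarrow},\, \eta',\, 1,\, \varepsilon\rangle$ is already in $\Sigma_1^0$-shape: the two adjacent existential quantifiers contract to one, and the matrix is decidable. Indeed, by Proposition~\ref{statesucc} the function $\T$ is total computable; a program state is a finite object (program text, a rational valuation restricted to the finitely many variables occurring in $P$, a rational, a finite word over $\{L,R\}$), so equality of states is decidable. Moreover the terminal valuation $\eta'$ is itself computable from $(P,\eta,k)$, so one may drop $\exists\eta'$ and merely test whether the first component of $\T_{k}(\sigma_{P,\eta},\, \varepsilon)$ equals ${\downarrow}$; this both confirms membership and keeps the quantified domain recursive (for ordinary programs the probability component is automatically $1$ and the history $\varepsilon$). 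Dually, $(P,\eta)\in\cHP$ iff $\forall k\,\forall\eta'\colon\T_{k}(\sigma_{P,\eta},\, \varepsilon)\neq\langle{\downarrow},\, \eta',\, 1,\, \varepsilon\rangle$, which is a $\Pi_1^0$ predicate, so $\cHP\in\Pi_1^0$.

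\emph{Hardness.} For $\Sigma_1^0$-hardness, fix an arbitrary $\Problem{C}\in\Sigma_1^0$, say $\Problem{C}=\{x\mid\exists y\colon(x,y)\in\Problem{R}\}$ with $\Problem{R}$ decidable. Uniformly in $x$ we build an ordinary program $P_x$ that, on a fixed trivial input $\eta_0$ (e.g.\ the everywhere-$0$ valuation) and with $x$ supplied as a constant, performs an unbounded search: for $y=0,1,2,\dots$ it decides $(x,y)\in\Problem{R}$ and halts as soon as the test succeeds, diverging otherwise. This is expressible in \sonprogs because that fragment is an arithmetic \texttt{WHILE}-language and hence Turing-complete, so both the characteristic function of $\Problem{R}$ and the search loop are implementable. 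By construction $P_x$ halts on $\eta_0$ iff $\exists y\colon(x,y)\in\Problem{R}$, i.e.\ iff $x\in\Problem{C}$, and $x\mapsto(P_x,\eta_0)$ is computable; hence $\Problem{C}\leqm\HP$. Since $\Problem{C}$ was arbitrary, $\HP$ is $\Sigma_1^0$-hard and therefore $\Sigma_1^0$-complete. The very same reduction also witnesses $\overline{\Problem{C}}\leqm\cHP$ for an arbitrary $\Pi_1^0$ set $\overline{\Problem{C}}$, whence $\cHP$ is $\Pi_1^0$-complete.

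\emph{Main obstacle.} The only step beyond bookkeeping is the Turing-completeness of \sonprogs together with the encoding conventions: one must check that programs over $\bbbq^+$-valued variables using the given grammar (truncated subtraction being available through $\max\{\cdot,0\}$) can simulate arbitrary computable functions, and that the domain over which $\eta'$ is quantified can be taken recursive so that the matrix of the $\Sigma_1^0$-formula is genuinely decidable. Both points are standard, and the classical content of the statement is covered by \cite{odifreddi2}.
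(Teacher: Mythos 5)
The paper does not prove this theorem at all --- it is stated as a classical result with a citation to Odifreddi, and your argument is precisely the standard textbook proof that citation stands for: membership by contracting the two existential quantifiers over a decidable matrix (using that $\T$ is total computable and that the terminal valuation is recoverable from $(P,\eta,k)$), and hardness by the uniform unbounded-search reduction from an arbitrary $\Sigma_1^0$ set, with the $\cHP$ claims following by complementation. Your proof is correct, and your attention to the two genuinely checkable points --- Turing-completeness of \sonprogs over $\bbbq^+$-valued variables with truncated assignment, and recursiveness of the quantified domains --- is exactly where the care is needed.
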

\begin{theorem}
\label{LisSigma1comp}
$\LEXP$ is $\Sigma_1^0$--complete.
\end{theorem}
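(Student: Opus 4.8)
The plan is to show $\LEXP \in \Sigma_1^0$ and $\LEXP$ is $\Sigma_1^0$--hard. For membership, recall that $\Exp{P,\eta}{v}$ is defined as a supremum of the partial sums $S_n = \sum_{k=0}^{n} \sum_{w \in \{L,R\}^{\leq k}} \wp(\T_k(\sigma_{P,\eta},w),v)$, which form a nondecreasing sequence of nonnegative rationals, each of which is computable by Propositions~\ref{statesucc} and~\ref{prop:statesucc}. Hence $q < \Exp{P,\eta}{v}$ holds if and only if $\exists n\colon q < S_n$, and the relation ``$q < S_n$'' is decidable since $S_n$ is a computable rational. This is a single existential quantifier over a decidable relation, so $\LEXP \in \Sigma_1^0$.

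For hardness I would reduce from $\HP$ (Theorem~\ref{HPcomp}), which is $\Sigma_1^0$--complete. Given an instance $(P,\eta) \in \sonprogs \times \mathbb V$, I construct the probabilistic program $P' = P\ttt{;}\, v\mathrel{\ttt{:=}} 1$ where $v$ is a fresh variable, and output the tuple $(P',\, \eta,\, v,\, 0)$. Since $P$ is ordinary (no probabilistic choices), its only execution either reaches $\downarrow$ or diverges; in the former case $P'$ terminates with probability $1$ and $v$ has value $1$, so $\Exp{P',\eta}{v} = 1 > 0$; in the latter case no terminal state is ever reached, so $\Exp{P',\eta}{v} = 0$, which is not $> 0$. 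Thus $(P,\eta) \in \HP \iff (P',\eta,v,0) \in \LEXP$, and the map $(P,\eta) \mapsto (P',\eta,v,0)$ is clearly computable, giving $\HP \leqm \LEXP$ and hence $\Sigma_1^0$--hardness.

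The only subtlety worth checking carefully is the membership direction: one must confirm that the double summation defining $\Exp{P,\eta}{v}$ really can be truncated to a finite, uniformly computable partial sum for each $n$ — this is where Proposition~\ref{statesucc} (totality and computability of $\T$) and Proposition~\ref{prop:statesucc} (computability of $\wp$) are used, together with the fact that $\{L,R\}^{\leq k}$ is finite. Because every summand is nonnegative and the terms are indexed by $k$ in a way that only adds (never removes) contributions as $n$ grows, the partial sums are monotone, so their supremum being strictly above $q$ is witnessed by a single $n$; no reordering or convergence-rate argument is needed. I do not anticipate a genuine obstacle here — the reduction is elementary and the membership bound is a direct reading of the definition — so the main care is simply in spelling out that ``$q < S_n$'' is decidable and that the reduction function is total computable.
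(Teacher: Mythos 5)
Your proof is correct and follows essentially the same route as the paper: membership via the observation that the monotone partial sums witness $q<\Exp{P,\eta}{v}$ with a single existential quantifier over a decidable relation, and hardness via a many--one reduction from $\HP$. The only (harmless) difference is the reduction gadget --- the paper uses $v\mathrel{\ttt{:=}}0\ttt{;}\:\{v\mathrel{\ttt{:=}}1\}[\nicefrac{1}{2}]\{\mathit{TQ}\ttt{;}\:v\mathrel{\ttt{:=}}1\}$ with threshold $\nicefrac{1}{2}$, whereas you use $\mathit{TQ}\ttt{;}\:v\mathrel{\ttt{:=}}1$ with threshold $0$; both yield the required equivalence.
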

\begin{proof}
For $\LEXP \in \Sigma_1^0$ consider the following:
\begin{align*}
~&(P,\, \eta,\, v,\,  q) \in \LEXP\\
\Longleftrightarrow~& q < \Exp{P,\eta}{v}\\
\Longleftrightarrow~&q < \sum_{k = 0}^{\infty} ~ \sum_{w \in \{L,\, R\}^{{}\leq k}} ~ \wp\big(\T_{k}(\sigma_{P,\eta},\, w),\, v \big)\\
\Longleftrightarrow~&\exists \, y\colon q < \sum_{k = 0}^{y} ~ \sum_{w \in \{L,\, R\}^{{}\leq k}} ~ \wp\big(\T_{k}(\sigma_{P,\eta},\, w),\, v \big)\\
\Longrightarrow~&\LEXP \in \Sigma_1^0
\end{align*}
Figure \ref{fig} (left) gives an intuition on this formula.

It remains to show that $\LEXP$ is $\Sigma_1^0$--hard:
We do this by proving $\HP \leqm \LEXP$.
Consider the following function $f\colon \HP \leqm \LEXP$: $f$ takes an ordinary program $Q \in \sonprogs$ and a variable valuation $\eta$ as its input and computes $(P,\, \eta,\, v,\, \nicefrac 1 2)$, where $P$ is the probabilistic program $v\mathrel{\texttt{:=}}0\texttt{;}\:\{v\mathrel{\texttt{:=}}1\}[\nicefrac 1 2]\{\mathit{TQ}\texttt{;}\:v\mathrel{\texttt{:=}}1\}$ and $\mathit{TQ}$ is an ordinary program that simulates $Q$ on $\eta$.

\textit{Correctness of the reduction:} There are two cases:
(1) $Q$ terminates on input $\eta$. 
Then the expected outcome of variable $v$ after executing the program $P$ on input $\eta$ is 1, because in both branches, the variable will be set to 1.
As $\nicefrac 1 2 < 1$, we have that $(P,\, \eta,\, v,\, \nicefrac 1 2) \in \LEXP$.

(2) $Q$ does not terminate on input $\eta$. 
Then the expected outcome of variable $v$ after executing the program $P$ on input $\eta$ is $\nicefrac 1 2$, because only in the left branch (which has probability $\nicefrac 1 2$), the variable will be set to 1.
In the right branch, the program does not terminate and therefore the outcome of this branch is 0.
As $\nicefrac 1 2 \not< \nicefrac 1 2$, we have that $(P,\, \eta,\, v,\, \nicefrac 1 2) \not\in \LEXP$.
%
%\textit{\underline{Total Correctness:}}
%Program code for $\mathit{TQ}$ is basically a universal turing machine.
%So in total, the program code for $P$  and thereby the tuple $(P,\, \eta,\, v,\, \nicefrac 1 2)$ is computable.
%
%By Theorem \ref{HPcomp}, $\HP$ is $\Sigma_1^0$--complete, so for any $\Problem{A} \in \Sigma_1^0$ it holds that $\Problem A \leqm \HP$.
%Since we have just proven that $\HP \leqm \LEXP$, it follows that $\Problem A \leqm \HP \leqm \LEXP$, and by transitivity $\Problem A \leqm \LEXP$. 
\qed
\end{proof}
Theorem \ref{LisSigma1comp} implies that $\LEXP$ is recursively enumerable.
This means that all lower bounds for expected outcomes can be effectively enumerated by some algorithm.
Now, if upper bounds were recursively enumerable as well, then expected outcomes would be computable reals.
However, the contrary will be shown by establishing that $\REXP$ is $\Sigma_2^0$--complete, thus $\REXP \not\in \Sigma_1^0$ and hence $\REXP$ is not recursively enumerable.
$\Sigma_2^0$--hardness will be established by a reduction from the  complement of the universal halting problem for ordinary programs:
\begin{theorem}[The Universal Halting Problem \textnormal{\textbf{\cite{odifreddi2}}}]
\label{UHPcomplete}
The \textbf{universal halting problem} is a subset $\boldsymbol\UHP \subset \sonprogs$, which is characterized as $P \in \UHP$ iff $\forall \,\eta\colon (P,\, \eta) \in \HP$.
Let $\boldsymbol\cUHP$ denote the \textbf{complement of $\boldsymbol\UHP$}, i.e., $\cUHP = \sonprogs \setminus \UHP$.
$\UHP$ is $\Pi_2^0$--complete and $\cUHP$ is $\Sigma_2^0$--complete.
\end{theorem}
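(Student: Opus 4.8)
The plan is to prove the two assertions about $\UHP$ directly and then read off the statement for $\cUHP$ by complementation.

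For membership, I would expand the definition of $\UHP$ together with the characterisation of $\HP$ from Theorem~\ref{HPcomp}:
\begin{align*}
P \in \UHP \quad\Longleftrightarrow\quad \forall\, \eta\; \exists\, k\; \exists\, \eta'\colon\quad \T_k(\sigma_{P,\eta},\, \varepsilon) = \langle {\downarrow},\, \eta',\, 1,\, \varepsilon\rangle~.
\end{align*}
Since $\T$ is total computable (Proposition~\ref{statesucc}), for fixed $P,\eta,k$ we may compute $\T_k(\sigma_{P,\eta},\, \varepsilon)$ and decide whether it has the shape $\langle {\downarrow},\, \underline{~\:},\, 1,\, \varepsilon\rangle$; this absorbs the inner $\exists\,\eta'$ into a decidable matrix $\Problem D(P,\eta,k)$, leaving $P \in \UHP \Longleftrightarrow \forall\, \eta\; \exists\, k\colon \Problem D(P,\eta,k)$, which is of $\Pi_2^0$ shape. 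As only finitely many variables of any fixed $P$ are relevant, $\eta$ effectively ranges over a recursive domain, so the outer quantifier is admissible and $\UHP \in \Pi_2^0$.

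For hardness I would show that \emph{every} $\Problem A \in \Pi_2^0$ many--one reduces to $\UHP$. Write $x \in \Problem A \Longleftrightarrow \forall\, y\; \exists\, z\colon \Problem R(x,y,z)$ with $\Problem R$ decidable. Given $x$, I construct (computably in $x$) an ordinary program $P_x \in \sonprogs$ that inspects one designated input variable: if its value is not a natural number, $P_x$ halts immediately; if its value is some $y \in \bbbn$, then $P_x$ performs an unbounded search for the least $z$ with $\Problem R(x,y,z)$ and halts exactly when such a $z$ is found. Such a $P_x$ is expressible, since testing membership in $\bbbn$ and unbounded search are realisable with assignments and a \texttt{WHILE}--loop, exactly as the simulating program $\mathit{TQ}$ in the proof of Theorem~\ref{LisSigma1comp}. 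Then $P_x$ halts on every valuation iff for every $y \in \bbbn$ there is a $z$ with $\Problem R(x,y,z)$, i.e.\ iff $x \in \Problem A$; hence $x \mapsto P_x$ witnesses $\Problem A \leqm \UHP$. As $\Problem A$ was arbitrary, $\UHP$ is $\Pi_2^0$--hard, and with membership $\Pi_2^0$--complete.

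Finally, since many--one reducibility commutes with complementation ($\Problem C \leqm \Problem D$ iff $\overline{\Problem C} \leqm \overline{\Problem D}$) and $\Problem A \in \Pi_2^0$ iff $\cProblem{A} \in \Sigma_2^0$, the $\Pi_2^0$--completeness of $\UHP$ immediately yields the $\Sigma_2^0$--completeness of $\cUHP$. The main obstacle is the construction in the hardness step: the program inputs are valuations over $\bbbq^+$, whereas the quantifier $\forall\, y$ ranges over $\bbbn$, so $P_x$ must be designed so that inputs outside the intended decoding can never cause non--termination---otherwise $P_x$ could fail to be universally halting even when $x \in \Problem A$. This is precisely why $P_x$ is made to halt at once on every non--natural input.
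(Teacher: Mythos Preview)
The paper does not prove this theorem at all: it is stated with a citation to \cite{odifreddi2} as a classical fact from recursion theory and is used as a black box in the reductions of Theorems~\ref{RisSigma2comp}, \ref{EisPi2comp}, and \ref{ASTisPi2comp}. There is therefore no ``paper's own proof'' to compare against.

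Your argument is a correct rendering of the standard textbook proof. The membership part is fine, including the observation that only the finitely many variables occurring in $P$ matter so that the universal quantifier over $\eta$ effectively ranges over a recursive domain. The hardness part is also correct; your handling of the domain mismatch (valuations into $\bbbq^+$ versus the natural--number quantifier) by making $P_x$ halt immediately on any non--natural input is exactly the right fix, and your explicit discussion of why this is the main obstacle is on point. The passage to $\cUHP$ by complementation is routine. In short, your proposal is sound and self--contained; the paper simply takes the result as given.
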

\begin{theorem}
\label{RisSigma2comp}
$\REXP$ is $\Sigma_2^0$--complete.
\end{theorem}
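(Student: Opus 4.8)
The plan is to prove the two halves separately: membership $\REXP \in \Sigma_2^0$ and $\Sigma_2^0$-hardness via $\cUHP \leqm \REXP$.

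For membership, I would start from the definition $(P,\eta,v,q) \in \REXP \Leftrightarrow q > \Exp{P,\eta}{v}$. Since $\Exp{P,\eta}{v}$ is a supremum of the increasing sequence of partial sums $s_y := \sum_{k=0}^{y} \sum_{w \in \{L,R\}^{\leq k}} \wp(\T_k(\sigma_{P,\eta},w),v)$, the condition $q > \Exp{P,\eta}{v}$ is equivalent to saying that $q$ strictly exceeds every partial sum by a uniform margin, i.e. there is a rational slack $\varepsilon$ below which all partial sums stay: $\exists \varepsilon \in \bbbq^+ \, \forall y\colon s_y \leq q - \varepsilon$. Each $s_y$ is computable (by Propositions \ref{statesucc} and \ref{prop:statesucc}), so the matrix $s_y \leq q - \varepsilon$ is decidable, giving a $\exists\forall$ prefix and hence $\REXP \in \Sigma_2^0$. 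I would include the small argument that this $\exists\forall$ form really is equivalent to $q > \sup_y s_y$ (the forward direction: if $q > \sup_y s_y$ then pick $\varepsilon = (q - \sup_y s_y)/2$, wait — $\sup_y s_y$ need not be rational, so instead take any rational $\varepsilon$ with $0 < \varepsilon < q - \sup_y s_y$; the backward direction is immediate). Figure \ref{fig} (right) presumably illustrates this.

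For hardness I would reduce from $\cUHP$, which is $\Sigma_2^0$-complete by Theorem \ref{UHPcomplete}. The idea is to build, from an ordinary program $Q$, a probabilistic program $P$ whose expected outcome of a designated variable $v$ equals $1$ exactly when $Q$ halts on all inputs, and is strictly larger than $1$ (say $\geq 1 + \text{something}$, or at least $> 1$) when $Q$ fails to halt on some input — so that the query $q = 1$ lands in $\REXP$ iff $Q \in \cUHP$. Concretely: pick an effective enumeration $\eta_0, \eta_1, \ldots$ of $\mathbb V$ and let $P$ first toss coins to pick an index $n$ with probability $2^{-(n+1)}$ (a loop doing $\{v := v+1;\, \ldots\}[\nicefrac12]\{\ldots\}$), then simulate $Q$ on $\eta_n$, and upon termination set $v$ to some value chosen so that the contribution of branch $n$ is $2^{-(n+1)}$ when $Q$ halts on $\eta_n$ but is larger (or the branch contributes $0$ now and mass is redistributed) when it does not. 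The cleanest instantiation: in branch $n$, run $Q$ on $\eta_n$; if it halts, set $v := 1$; the total expected outcome is then $\sum_{n : Q \text{ halts on } \eta_n} 2^{-(n+1)}$, which equals $1$ iff $Q$ halts on every $\eta_n$ and is some value $< 1$ otherwise — but that gives $\LEXP$-flavoured behaviour, not $\REXP$. To get the $\REXP$ direction I would instead arrange that non-termination on some input *adds* mass: e.g. in branch $n$ first set $v := 2$, then run $Q$ on $\eta_n$, then (if it halts) set $v := 1$; now branch $n$ contributes $2^{-(n+1)}$ if $Q$ halts on $\eta_n$ and $2 \cdot 2^{-(n+1)}$ if it does not (because the assignment $v:=1$ is never reached, so $v$ stays $2$), except — non-termination means the branch never reaches a terminal state at all, contributing $0$. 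So the right gadget is: in branch $n$, run $Q$ on $\eta_n$ first; if it halts, then toss a further coin and with probability $\nicefrac12$ set $v := 2$ and with probability $\nicefrac12$ set $v := 0$, contributing $2^{-(n+1)} \cdot 1 = 2^{-(n+1)}$ on average — this still loses mass on non-halting. The genuinely clean solution is a *second* top-level coin: with probability $\nicefrac12$ the program sets $v := 2$ and halts, and with probability $\nicefrac12$ it enters the index-picking-and-simulating construction above with $v := 1$ on each halting branch. Then $\Exp{P,\eta}{v} = 1 + \nicefrac12 \cdot \big(\sum_{n: Q \text{ halts on }\eta_n} 2^{-(n+1)}\big)$, which is $\leq \nicefrac32$ always, equals $\nicefrac32$ iff $Q \in \UHP$, and is $< \nicefrac32$ iff $Q \in \cUHP$. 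So the reduction $f(Q) = (P, \eta_0, v, \nicefrac32)$ with fresh $v$ satisfies $Q \in \cUHP \iff f(Q) \in \REXP$. I would then verify correctness in the two cases and note $f$ is computable since the enumeration of $\mathbb V$ and the construction of $P$ (in particular the index-picking loop and the embedding of $Q$ as an ordinary subprogram) are effective.

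The main obstacle I anticipate is the hardness direction: one has to be careful that non-termination of $Q$ on some $\eta_n$ makes the expected outcome drop *strictly* below the target (so that $q = \nicefrac32$ witnesses membership in $\REXP$), while termination on *all* inputs makes it hit the target exactly (so $q = \nicefrac32 \not> \nicefrac32$ keeps it out). Getting the arithmetic of the geometric weights to sum to exactly the target, and arguing rigorously that the infinite sum equals the target in the all-halting case (using that the weights $2^{-(n+1)}$ sum to $1$), is the delicate part; the membership direction and the routine check that $f$ is computable should be straightforward.
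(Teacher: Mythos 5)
Your proposal is correct and follows essentially the same route as the paper: membership via the $\exists\delta\,\forall y$ rewriting of $q>\sup_y s_y$ (the paper writes $\exists\,\delta\:\forall\, y\colon q-\delta>s_y$), and hardness via a reduction from $\cUHP$ that samples an input index under a geometric distribution and arranges the expected outcome of $v$ to equal the target exactly iff $Q$ halts on every input, and to fall strictly short otherwise. Incidentally, the ``clean'' construction you dismissed as $\LEXP$--flavoured (branch $n$ sets $v:=1$ upon halting, target $q=1$) is in substance the paper's own reduction---an expected outcome that is always at most $1$ and drops strictly below $1$ precisely on $\cUHP$ is exactly what the query $1>\Exp{P,\eta}{v}$ detects---so your extra top-level coin and target $\nicefrac{3}{2}$ are harmless but unnecessary.
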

\begin{proof}
For $\REXP \in \Sigma_2^0$ consider the following:
\begin{align*}
~&(P,\, \eta,\, v,\,  q) \in \REXP\\
\Longleftrightarrow~&q > \Exp{P,\eta}{v}\\
\Longleftrightarrow~&q > \sum_{k = 0}^{\infty} ~ \sum_{w \in \{L,\, R\}^{{}\leq k}} ~ \wp\big(\T_{k}(\sigma_{P,\eta},\, w),\, v \big)\\
\Longleftrightarrow~&\exists\, \delta \: \forall\, y\colon q - \delta > \sum_{k = 0}^{y} ~ \sum_{w \in \{L,\, R\}^{{}\leq k}} ~ \wp\big(\T_{k}(\sigma_{P,\eta},\, w),\, v \big)\\
\Longrightarrow~&\REXP \in \Sigma_2^0
\end{align*}\normalsize
Figure \ref{fig} (right) gives an intuition on this formula.

It remains to show that $\REXP$ is $\Sigma_2^0$--hard:
We do this by proving $\cUHP \leqm \REXP$.
Consider the following function $f\colon \cUHP \leqm \REXP$: $f$ takes an ordinary program $Q \in \sonprogs$ as its input and computes the triple $(P,\, \eta,\, v,\, 1)$, where $\eta$ is an arbitrary but fixed input, and $P \in \soprogs$ is the following probabilistic program:
\begin{alltt}
\(i\) := 0;  \{\(c\) := 0\} [0.5] \{\(c\) := 1\};
while (\(c\) \(\neq\) 0)\{ \(i\) := \(i\) + 1; \{\(c\) := 0\} [0.5] \{\(c\) := 1\} \};

\(k\) := 0; \{\(c\) := 0\} [0.5] \{\(c\) := 1\};
while (\(c\) \(\neq\) 0)\{ \(k\) := \(k\) + 1; \{\(c\) := 0\} [0.5] \{\(c\) := 1\} \};

\(v\) := 0; \(TQ\)
\end{alltt}
$TQ$ is a program that computes $\alpha \big(\T_k\big(\big\langle Q,\, g_Q(i),\, 1,\, \varepsilon\big\rangle,\, \varepsilon\big)\big) \cdot 2^{k+1}$ and stores the result in the variable $v$, and $g_Q\colon \bbbn \rightarrow \mathbb V$ is some computable bijection, such that $\forall z \in \sovars\colon \big(g_Q(i)\big)(z) \neq 0$ implies that $z$ occurs in $Q$.

\textit{Correctness of the reduction:}  $\alpha \big(\T_k\big(\big\langle Q,\, g_Q(i),\, 1,\, \varepsilon\big\rangle,\, \varepsilon\big)\big) \cdot 2^{k+1}$ returns $2^{k+1}$ if and only if $Q$ halts on input $g_Q(i)$ after exactly $k$ steps (otherwise it returns 0).
 The two while--loops generate independent geometric distributions with parameter $\nicefrac 1 2$ on $i$ and $k$, respectively, so the probability of generating exactly the numbers $i$ and $k$ is $\nicefrac{1}{2^{k+1}} \cdot \nicefrac{1}{2^{k+1}} = \nicefrac{1}{2^{i+k+2}}$. 
The expected outcome of $v$ after executing the program $P$ on any input $\eta$ is hence
\begin{align*}
&\sum_{i= 0}^{\infty} \sum_{k= 0}^{\infty} \frac{1}{2^{i+k+2}} \cdot \alpha\bigg(\T_k\Big(\big\langle Q,\, g_Q(i),\, 1,\, \varepsilon \big\rangle,\, \varepsilon\Big)\bigg) \cdot 2^{k+1}~.
\end{align*}
Since for each input, the number of steps until termination of $Q$ is either unique or does not exist, the formula for the expected outcome reduces to $\sum_{i = 0}^{\infty} \frac{1}{2^{i+1}} = 1$ if and only if $Q$ halts on every input after some finite number of steps.
Thus if there exists an input on which $Q$  \emph{does not} eventually halt, then $(P,\, \eta,\, v,\, 1) \in \REXP$ as then the expected value is strictly less than one.
If, on the other hand, $Q$ \emph{does} halt on every input, then the expected outcome is exactly one and hence $(P,\, \eta,\, v,\, 1) \not\in \REXP$.
%
%\textit{\underline{Total Correctness:}}
%It is an easy but tedious exercise to construct a program computing $g_Q(i)$ given only $Q$. 
%Program code for $\alpha$, $\T$, multiplication and potentiation is also computable.
%So in total, the program code for $P$  and thereby the triple $(P,\, \eta,\, v,\, 1)$ is computable.
%
%By Theorem \ref{UHPcomplete}, $\cUHP$ is $\Sigma_2^0$--complete, so for any $\Problem{A} \in \Sigma_2^0$ it holds that $\Problem A \leqm \cUHP$.
%Since we have just proven that $\cUHP \leqm \REXP$, it follows that $\Problem A \leqm \cUHP \leqm \REXP$, and by transitivity $\Problem A \leqm \REXP$. 
\qed
\end{proof}
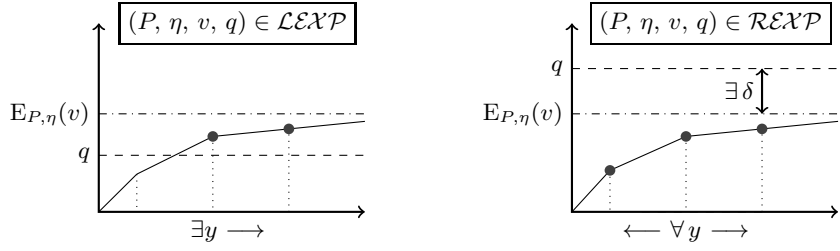
\begin{figure}[t!]
\begin{center}
\begin{tikzpicture}[scale=1]
    \draw[white, use as bounding box] (0,0.1) rectangle (3.5,2.8);
    % Draw axes
    \draw [<->,thick] (0,2.5) node (yaxis) [above] {}
        |- (3.5,0) node (xaxis) [right] {};
    % Draw two intersecting lines
    \draw (0,0) coordinate (a_1) -- (0.5,.5) coordinate (a_2);
    \draw (.5,.5) coordinate (a_3) -- (1.5,1) coordinate (a_4);
    \draw (1.5,1) coordinate (a_5) -- (3.5,1.2) coordinate (a_6);
    \draw[dashed] (0,.75) coordinate (b_1) -- (3.5,.75) coordinate (b_2) node[left] at (0,.75) {$q$};
    %\node[right] at (1.5,2){$\sum \sum \wp({}\cdots{})$};
    % Calculate the intersection of the lines a_1 -- a_2 and b_1 -- b_2
    % and store the coordinate in c.
    \coordinate (c1) at (0.5, 0);
    \coordinate (c2) at (0.5, 1);
    \coordinate (c) at (intersection of a_1--a_2 and c1--c2);
    \draw[dotted] (c) -- (xaxis -| c);
    \coordinate (cc1) at (1.5, 0);
    \coordinate (cc2) at (1.5, 1);
    \coordinate (cc) at (intersection of a_3--a_4 and cc1--cc2);
    \draw[dotted] (cc) -- (xaxis -| cc) node[below] at (1.70, 0){${\exists y}~ {\longrightarrow}$};
    \coordinate (ccc1) at (2.5, 0);
    \coordinate (ccc2) at (2.5, 1);
    \coordinate (ccc) at (intersection of a_5--a_6 and ccc1--ccc2);
    \draw[dotted] (xaxis -| ccc) -- (ccc);
    
    % Draw lines indicating intersection with y and x axis. Here we use
    % the perpendicular coordinate system
    %\draw[dashed] (c) -- (xaxis -| c) node[below] {$\exists y_1,y_2$};
    \draw[dashdotted] (0, 1.3) coordinate (eeeee1) -- (3.5, 1.3) coordinate (eeeee2);
    \node[left] at (0, 1.3) {$\Exp{P,\eta}{v}$};
    % Draw a dot to indicate intersection point
    \fill[darkgray] (cc) circle (2pt);
    \fill[darkgray] (ccc) circle (2pt);
    \node[] at (1.825, 2.5) {\fbox{$(P,\, \eta,\, v,\, q) \in \LEXP$}};
\end{tikzpicture}
\qquad\qquad\qquad\qquad
\begin{tikzpicture}[scale=1]
    \draw[white, use as bounding box] (0,0.1) rectangle (3.5,2.8);
    % Draw axes
    \draw [<->,thick] (0,2.5) node (yaxis) [above] {}
        |- (3.5,0) node (xaxis) [right] {};
    % Draw two intersecting lines
    \draw (0,0) coordinate (a_1) -- (0.5,.55) coordinate (a_2);
    \draw (.5,.55) coordinate (a_3) -- (1.5,1) coordinate (a_4);
    \draw (1.5,1) coordinate (a_5) -- (3.5,1.2) coordinate (a_6);
    \draw[dashed] (0, 1.9) coordinate (b_1) -- (3.5, 1.9) coordinate (b_2) node[left] at (0, 1.9) {$q$};
    %\draw[dashed] (0,1.2) coordinate (b_1) -- (3.5,1.2) coordinate (b_2);
    %\node[right] at (.8, .8) {$\sum \sum \wp({}\cdots{})$};
    % Calculate the intersection of the lines a_1 -- a_2 and b_1 -- b_2
    % and store the coordinate in c.
     \draw[dashdotted] (0, 1.3)  coordinate (e1) -- (3.5, 1.3) coordinate (e2);
    
    \coordinate (c1) at (0.5, 0);
    \coordinate (c2) at (0.5, 1);
    \coordinate (c) at (intersection of a_1--a_2 and c1--c2);
    \draw[dotted] (c) -- (xaxis -| c);
    \coordinate (cc1) at (1.5, 0);
    \coordinate (cc2) at (1.5, 1);
    \coordinate (cc) at (intersection of a_3--a_4 and cc1--cc2);
    \draw[dotted] (cc) -- (xaxis -| cc) node[below] {${\longleftarrow} ~ {\forall\, y} ~ {\longrightarrow}$};
    \coordinate (ccc1) at (2.5, 0);
    \coordinate (ccc2) at (2.5, 1);
    \coordinate (ccc) at (intersection of a_5--a_6 and ccc1--ccc2);
    \draw[dotted] (xaxis -| ccc) coordinate (dac) -- (ccc) coordinate (dacc);
    
    \draw[<->, thick] (2.5,1.3) -- (2.5, 1.9) node at (2.2, 1.6) {$\exists\, \delta$};

     \draw[dashed] (ccc1) -- (.875,0);
    \fill[darkgray] (ccc) circle (2pt);
     \fill[darkgray] (cc) circle (2pt);
      \fill[darkgray] (c) circle (2pt);
    %\fill[darkgray] (f) circle (2pt);
    %\fill[darkgray] (g) circle (2pt);
    %\draw[<->, thick] (fff1) -- (ggg1) node at (0.675, 2.1) {$\delta$};

    % Draw lines indicating intersection with y and x axis. Here we use
    % the perpendicular coordinate system
   
    \node[left] at (0, 1.3) {$\Exp{P,\eta}{v}$};
    % Draw a dot to indicate intersection point

    %\coordinate (dd) at 
     \node[] at (1.825, 2.5) {\fbox{$(P,\, \eta,\,v,\, q) \in \REXP$}};
\end{tikzpicture}
\end{center}
\caption{Schematic depiction of the formulae defining $\LEXP$ and $\REXP$, respectively. In each diagram, the solid line represents the monotonically increasing graph of $\sum_{k = 0}^{y} \: \sum_{w \in \{L,\, R\}^{{}\leq k}} \: \wp\big(\T_{k}(\sigma_{P,\eta},\, w),\, v \big)$ plotted over increasing $y$.}
%\caption{}
\label{fig}
\end{figure}
Finally, we establish the following result regarding exact expected outcomes:
\begin{theorem}
\label{EisPi2comp}
$\EXP$ is $\Pi_2^0$--complete.
\end{theorem}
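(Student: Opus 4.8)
The plan is to place $\EXP$ in $\Pi_2^0$ by a closure argument and then to read off $\Pi_2^0$-hardness almost directly from the reduction built in the proof of Theorem~\ref{RisSigma2comp}. For membership, note that $(P,\eta,v,q) \in \EXP$ iff $q = \Exp{P,\eta}{v}$, and this holds iff \emph{neither} $q < \Exp{P,\eta}{v}$ \emph{nor} $q > \Exp{P,\eta}{v}$; that is, $\EXP = \overline{\LEXP} \cap \overline{\REXP}$. By Theorem~\ref{LisSigma1comp}, $\LEXP \in \Sigma_1^0 \subseteq \Sigma_2^0$, so $\overline{\LEXP} \in \Pi_2^0$; by Theorem~\ref{RisSigma2comp}, $\REXP \in \Sigma_2^0$, so $\overline{\REXP} \in \Pi_2^0$. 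Since $\Pi_2^0$ is closed under intersection, $\EXP \in \Pi_2^0$. (One could also argue directly: with $S_y = \sum_{k=0}^{y}\sum_{w \in \{L,\, R\}^{{}\leq k}} \wp\big(\T_k(\sigma_{P,\eta},w),v\big)$ the computable $y$-th partial sum, monotonicity of $S_y$ in $y$ gives that $q = \Exp{P,\eta}{v}$ iff $\big(\forall y\colon S_y \leq q\big)$ and $\big(\forall\, \delta{>}0\ \exists y\colon q - S_y < \delta\big)$, a conjunction of a $\Pi_1^0$- and a $\Pi_2^0$-formula.)

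\textbf{Hardness.} It suffices to show $\UHP \leqm \EXP$, since $\UHP$ is $\Pi_2^0$-complete by Theorem~\ref{UHPcomplete}. Consider the map $g$ that sends an ordinary program $Q \in \sonprogs$ to the quadruple $(P,\, \eta,\, v,\, 1)$ with $P$, $\eta$, $v$ \emph{exactly} as in the proof of Theorem~\ref{RisSigma2comp}; $g$ is computable because $P$ is assembled effectively from $Q$. That proof establishes
\[
\Exp{P,\eta}{v} ~=~ \sum_{i=0}^{\infty}\sum_{k=0}^{\infty}\frac{1}{2^{i+k+2}}\cdot\alpha\Big(\T_k\big(\langle Q,\, g_Q(i),\, 1,\, \varepsilon\rangle,\, \varepsilon\big)\Big)\cdot 2^{k+1} ~=~ \sum_{i \in H}\frac{1}{2^{i+1}}~,
\]
where $H = \{\, i \mid Q \text{ halts on } g_Q(i)\,\}$ and we use that the number of steps to termination of $Q$ on a fixed input is unique when it exists. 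This value is always at most $1$, and it equals $1$ precisely when $H = \bbbn$. By the stated properties of $g_Q$, $Q$ halts on every $g_Q(i)$ iff $Q$ halts on every input, i.e.\ iff $Q \in \UHP$. Hence $Q \in \UHP \iff \Exp{P,\eta}{v} = 1 \iff (P,\eta,v,1) \in \EXP$, so $g\colon \UHP \leqm \EXP$, and $\EXP$ is $\Pi_2^0$-hard.

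\textbf{Main obstacle.} There is no substantial obstacle here; the statement is close to a corollary of Theorem~\ref{RisSigma2comp}. The only point deserving care is the observation that the very function used there to witness $\cUHP \leqm \REXP$ simultaneously witnesses $\UHP \leqm \EXP$, because the constructed expected outcome never exceeds the threshold $1$ and attains it exactly on $\UHP$; and, on the membership side, that one must bound $\EXP$ by a $\Pi_2^0$ description rather than a $\Sigma_2^0$ one, which is why the intersection form $\overline{\LEXP}\cap\overline{\REXP}$ (or the $\forall y$/$\forall\delta\exists y$ form) is the right one.
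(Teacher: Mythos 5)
Your proposal is correct and follows essentially the same route as the paper: membership via $\EXP = \overline{\LEXP} \cap \overline{\REXP}$ using the quantifier forms from Theorems~\ref{LisSigma1comp} and~\ref{RisSigma2comp} (the paper just merges the quantifiers explicitly rather than citing closure under intersection), and hardness by reusing the very reduction from the proof of Theorem~\ref{RisSigma2comp}, observing that its expected outcome is bounded by $1$ and attains $1$ exactly when $Q \in \UHP$.
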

\begin{proof}
For $\EXP \in \Pi_2^0$ consider the following: By Theorem \ref{RisSigma2comp}, there exists a decidable relation $\Problem R$, such that $(P,\, \eta,\, v,\, x) \in \REXP$ iff $\exists r_1 \forall r_2 \colon (r_1,\, r_2,\, P,\, \eta,\, v,\, x) \in \Problem R$.
Furthermore from Theorem \ref{LisSigma1comp} it follows that there exists a decidable relation $\Problem L$, such that $(P,\, \eta,\, v,\, x) \in \LEXP$ iff $\exists \ell \colon (\ell,\, P,\, \eta,\, v,\, x) \in \Problem L$.
Let $\neg \Problem R$ and $\neg \Problem L$ be the (decidable) negations of $\Problem R$ and $\Problem L$, respectively, then:
%\belowdisplayskip=-1\baselineskip
\begin{align*}
~&(P,\, \eta,\, v,\, q) \in \EXP\\
\Longleftrightarrow~&q = \Exp{P,\eta}{v}\\
\Longleftrightarrow~&q \leq \Exp{P,\eta}{v} \,\wedge\, q \geq \Exp{P,\eta}{v}\\
\Longleftrightarrow~&\neg \big(q > \Exp{P,\eta}{v}\big) \,\wedge\, \neg \big(q < \Exp{P,\eta}{v}\big)\\
\Longleftrightarrow~&\neg \big(\exists r_1\: \forall r_2 \colon (r_1,\, r_2,\, P,\, \eta,\, v,\, q) \in \Problem R\big) \,\wedge\, \neg \big(\exists \ell \colon (\ell,\, P,\, \eta,\, v,\, q) \in \Problem L\big)\\
\Longleftrightarrow~&\big(\forall r_1\: \exists r_2 \colon (r_1,\, r_2,\, P,\, \eta,\, v,\, q) \in \neg \Problem R\big) \,\wedge\, \big(\forall \ell \colon (\ell,\, P,\, \eta,\, v,\, q) \in \neg \Problem L\big)\\
\Longleftrightarrow~&\forall r_1\: \forall \ell\: \exists r_2 \colon (r_1,\, r_2,\, P,\, \eta,\, v,\, q) \in \neg \Problem R \,\wedge\, (\ell,\, P,\, \eta,\, v,\, q) \in \neg \Problem L\\
\Longrightarrow~&\EXP \in \Pi_2^0
\end{align*}
It remains to show that $\EXP$ is $\Pi_2^0$--hard.
We do this by proving $\UHP \leqm \EXP$.
Consider again the reduction function $f$ from the proof of Theorem \ref{RisSigma2comp}: Given an ordinary program $Q$, $f$ computes the triple $(P,\, \eta,\, v,\, 1)$, where $P$ is a probabilistic program which has an expected outcome of one for the variable $v$ if and only if $Q$ terminates on all inputs, which is nothing else than $Q \in \UHP$.
Thus $f\colon \UHP \leqm \EXP$.
%
%By Theorem \ref{UHPcomplete}, $\UHP$ is $\Pi_2^0$--complete, so for any $\Problem{A} \in \Pi_2^0$ it holds that $\Problem A \leqm \UHP$.
%Since we have just proven that $\UHP \leqm \EXP$, it follows that $\Problem A \leqm \UHP \leqm \EXP$, and by transitivity $\Problem A \leqm \EXP$. 
\qed
\end{proof}
%
%
%Intuitively the above proof asserts that we check whether $q = \Exp{P}{v}$ by deciding both $q \leq \Exp{P}{v}$ and $q \geq \Exp{P}{v}$ and that this check can be done by deciding a $\Pi_2^0$--relation.
%Furthermore, we now establish the main theorem showing that $\EXP$ is $\Pi_2^0$--complete, thus very hard to solve:

\section{The Hardness of Deciding Probabilistic Termination}
This section presents the main contributions of this paper: Hardness results on several variations of almost--sure termination problems.
We first establish that deciding almost--sure termination of a program on a given input is $\Pi_2^0$--complete:
\begin{theorem}
\label{ASTisPi2comp}
$\AST$ is $\Pi_2^0$--complete.
\end{theorem}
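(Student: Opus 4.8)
The plan is to prove both halves of completeness separately: membership $\AST \in \Pi_2^0$ and $\Pi_2^0$--hardness. For membership I would argue directly from Definition~\ref{exp}. Writing
\[
  s_y ~=~ \sum_{k=0}^{y} ~ \sum_{w \in \{L,\, R\}^{{}\leq k}} ~ \alpha\big(\T_{k}(\sigma_{P,\eta},\, w)\big)
\]
for the $y$--th partial sum of the series defining $\Pr_{P,\eta}({\downarrow})$, Propositions~\ref{statesucc} and~\ref{prop:statesucc} make $(P,\eta,y) \mapsto s_y$ computable, the sequence $(s_y)_y$ is non--decreasing with supremum $\Pr_{P,\eta}({\downarrow})$, and every $s_y \leq 1$ since its nonzero summands are the probabilities of pairwise distinct terminating computation paths. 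Therefore $\Pr_{P,\eta}({\downarrow}) = 1$ holds if and only if $\forall n \geq 1 ~ \exists y\colon s_y > 1 - \tfrac1n$, a $\Pi_2^0$ formula over a decidable relation; hence $\AST \in \Pi_2^0$. (An even shorter route: mapping $P$ to the program ``run $P$, then assign the constant $1$ to a fresh variable $v$'' yields $\AST \leqm \EXP$, so membership follows at once from Theorem~\ref{EisPi2comp}.)

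For hardness I would reduce from $\UHP$, which is $\Pi_2^0$--complete by Theorem~\ref{UHPcomplete}. Given an ordinary program $Q$, let the reduction $f$ return the pair $(P,\eta)$, where $\eta$ is an arbitrary fixed valuation and $P$ is the probabilistic program that first draws a natural number $i$ from a geometric distribution with parameter $\tfrac12$ --- using exactly the loop gadget already employed in the proof of Theorem~\ref{RisSigma2comp}, i.e.\ a loop that keeps incrementing $i$ and, after each increment, flips a fair coin to decide whether to leave the loop --- and then runs an ordinary sub--program $\mathit{TQ}$ that simulates $Q$ on input $g_Q(i)$, where $g_Q$ is the computable enumeration of inputs from that same proof. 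Clearly $f$ is computable.

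For correctness, note that the drawing loop terminates with probability $1$ and leaves $i$ equal to $n$ with probability exactly $2^{-(n+1)}$, and that, conditioned on $i = n$, the program $P$ terminates if and only if $Q$ halts on $g_Q(n)$. Combining these, $\Pr_{P,\eta}({\downarrow}) = \sum_{n\,:\,Q\text{ halts on }g_Q(n)} 2^{-(n+1)}$, which equals $1$ precisely when $Q$ halts on $g_Q(n)$ for \emph{every} $n$, i.e.\ on every input, i.e.\ $Q \in \UHP$. Hence $f\colon \UHP \leqm \AST$, and together with membership this yields $\Pi_2^0$--completeness.

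I expect the one genuinely delicate point to be this correctness argument --- specifically, observing that a \emph{single} input on which $Q$ diverges already drags $\Pr_{P,\eta}({\downarrow})$ strictly below $1$, because that input is still drawn with strictly positive probability $2^{-(n+1)}$, whereas halting on all of the (countably many) inputs makes the geometric mixture sum to exactly $1$. The membership direction is routine bookkeeping; the only part of it worth making explicit is the bound $s_y \leq 1$, which is exactly what licenses rewriting ``$\sup_y s_y = 1$'' as the $\forall\exists$ statement above.
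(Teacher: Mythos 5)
Your proposal is correct and follows essentially the same route as the paper: the hardness half uses exactly the paper's reduction from $\UHP$ (geometric sampling of $i$ followed by simulation of $Q$ on $g_Q(i)$) with the same correctness argument, and your parenthetical membership argument via $\AST \leqm \EXP$ with the program ``$v\mathrel{\ttt{:=}}0$; $Q$; $v\mathrel{\ttt{:=}}1$'' is precisely what the paper does. Your primary membership argument---writing $\Pr_{P,\eta}({\downarrow})=1$ directly as $\forall n\,\exists y\colon s_y > 1-\tfrac1n$ over the computable, monotone, $1$--bounded partial sums---is a harmless and equally valid variation that avoids invoking Theorem~\ref{EisPi2comp}.
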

\begin{proof}
For $\AST \in \Pi_2^0$ we show $\AST \leqm \EXP$.
For that, consider the following function $f\colon \AST \leqm \EXP$ which takes a probabilistic program $Q$ and an input $\eta$ as its input and computes the tuple $(P,\, \eta,\, v,\, 1)$, where $v \in \sovars$ does not occur in $Q$ and $P$ is the probabilistic program $v \mathrel{\ttt{:=}} 0\ttt{;}\: Q\ttt{;}\: v \mathrel{\ttt{:=}} 1$

\textit{Correctness of the reduction:}
On executing $P$, the variable $v$ is set to one only in those runs in which the program $Q$ terminates on $\eta$. So the expected value of $v$ converges to one, if and only if the probability of $Q$ terminating converges to one.
So if $(Q,\, \eta) \in \AST$, then and only then $(P,\, \eta,\, v,\, 1) \in \EXP$.
Thus $f\colon \AST \leqm \EXP$~.
By Theorem \ref{EisPi2comp}, $\EXP$ is $\Pi_2^0$--complete, so it follows directly that $\AST \in \Pi_2^0$.

It remains to show that $\AST$ is $\Pi_2^0$--hard.
For that we many--one reduce the $\Pi_2^0$--complete universal halting problem to $\AST$ using the following function $f'\colon \UHP \leqm \AST$: $f'$ takes an ordinary program $Q$ as its input and computes the pair $(P',\, \eta)$, where $\eta$ is some arbitrary but fixed input and $P'$ is the following probabilistic program:
\begin{alltt}
\(i\) := 0; \{\(c\) := 0\} [0.5] \{\(c\) := 1\};
while (\(c\) \(\neq\) 0)\{ \(i\) := \(i\) + 1; \{\(c\) := 0\} [0.5] \{\(c\) := 1\} \};
\(SQ\)
\end{alltt}
where $SQ$ is an ordinary program that on any input $\eta$ simulates the program $Q$ on input $g_Q(i)$, and $g_Q\colon \bbbn \rightarrow \mathbb V$ is some computable bijection, such that $\forall v \in \sovars\colon \big(g_Q(i)\big)(v) \neq 0$ implies that $v$ occurs in $Q$.

\textit{{Correctness of the reduction:}}
The while--loop in $P'$ establishes a geometric distribution with parameter $\nicefrac 1 2$ on $i$ and hence a geometric distribution on all possible inputs for $Q$.
After the while--loop, the program $Q$ is simulated on the input generated probabilistically in the while--loop.
Obviously then the entire program $P'$ terminates with probability one on any arbitrary input $\eta$, i.e.\ terminates almost--surely on $\eta$, if and only if the simulation of $Q$ terminates on every input.
Thus $Q \in \UHP$ if and only if $(P',\, \eta) \in \AST$.
%
%\textit{\underline{Total Correctness:}} As mentioned in the proof of Lemma \ref{RisSigmacomp}, the program code for $g_Q$ is computable.
%Also the program code for a universal program capable of simulating any program $Q$ on a given input is computable \cite{kleeneNF}.
%So in total, the program code for $P$ is computable.
%
%By Theorem \ref{UHPcomplete}, $\UHP$ is $\Pi_2^0$--complete, so for any $\Problem{A} \in \Pi_2^0$ it holds that $\Problem A \leqm \UHP$.
%Since we have just proven that $\UHP \leqm \AST$, it follows that $\Problem A \leqm \UHP \leqm \AST$, and by transitivity $\Problem A \leqm \AST$. 
\qed
\end{proof}
While for ordinary programs there is a complexity leap when moving from the halting problem for some given input to the universal halting problem, we establish that \emph{there is no such leap in the probabilistic setting}, i.e.\ $\UAST$ is as hard as $\AST$:
\begin{theorem}
\label{UASTisPi2comp}
$\UAST$ is $\Pi_2^0$--complete.
\end{theorem}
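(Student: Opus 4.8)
The plan is to establish membership and hardness separately. For membership, $\UAST = \{P \mid \forall \eta\colon (P,\eta) \in \AST\}$, and by Theorem~\ref{ASTisPi2comp} the predicate $(P,\eta)\in\AST$ is $\Pi_2^0$, i.e.\ expressible as $\forall y_1\,\exists y_2\colon (P,\eta,y_1,y_2)\in\Problem R$ for a decidable $\Problem R$. Prepending a universal quantifier over $\eta$ (which ranges over the recursive domain $\mathbb V$) yields $P\in\UAST \Longleftrightarrow \forall\eta\,\forall y_1\,\exists y_2\colon (P,\eta,y_1,y_2)\in\Problem R$, and the two leading universal quantifiers contract to one, so $\UAST\in\Pi_2^0$. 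This part is routine.

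For $\Pi_2^0$-hardness the key idea is the same trick used for $\AST$: internalize the universal quantifier over inputs into a probabilistic choice. I would many--one reduce $\UHP$ to $\UAST$. Given an ordinary program $Q$, the reduction $f'$ produces the probabilistic program $P'$ that (i) builds a value $i$ via the geometric-distribution while--loop $i\mathrel{\ttt{:=}}0\ttt{;}\ \{c\mathrel{\ttt{:=}}0\}[0.5]\{c\mathrel{\ttt{:=}}1\}\ttt{;}\ \ttt{WHILE}(c\neq 0)\{\,i\mathrel{\ttt{:=}}i+1\ttt{;}\ \{c\mathrel{\ttt{:=}}0\}[0.5]\{c\mathrel{\ttt{:=}}1\}\,\}$ and then (ii) runs $SQ$, an ordinary program simulating $Q$ on input $g_Q(i)$, exactly as in the proof of Theorem~\ref{ASTisPi2comp}. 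The point is that this $P'$ does not depend on any input at all: every variable it reads is either initialized by $P'$ itself or irrelevant, so its termination behavior is the same on \emph{every} $\eta\in\mathbb V$. Hence $P'\in\UAST \Longleftrightarrow (P',\eta)\in\AST$ for the fixed $\eta$, and by the correctness argument already given in Theorem~\ref{ASTisPi2comp} this holds iff $Q$ halts on every input, i.e.\ iff $Q\in\UHP$. Since $\UHP$ is $\Pi_2^0$-complete, $\UAST$ is $\Pi_2^0$-hard, and together with membership it is $\Pi_2^0$-complete.

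The main subtlety to get right is the claim that $P'$ is genuinely input-independent, so that the outer $\forall\eta$ in the definition of $\UAST$ is vacuous: one must ensure the leading assignment $i\mathrel{\ttt{:=}}0$ overwrites any incoming value of $i$, that $g_Q$ is chosen so $SQ$ only ever inspects variables it has itself set from $i$, and that the fresh control variable $c$ likewise does not leak the input. Given the semantics in Definition~\ref{def:semantics} and the construction of $g_Q$ from Theorem~\ref{ASTisPi2comp}, this is straightforward but should be stated explicitly, since it is precisely \emph{why} there is no complexity leap from $\AST$ to $\UAST$: the program constructed in the non-universal reduction already quantifies internally over all inputs of $Q$, leaving nothing for the external universal quantifier to do.
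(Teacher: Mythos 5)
Your membership argument is the same as the paper's. For hardness, however, you take a genuinely different route: the paper proves $\AST \leqm \UAST$ by a one--line reduction that hard--codes the input (given $(Q,\eta)$, output the program that first initializes all variables according to $\eta$ and then runs $Q$), and then invokes the already--established $\Pi_2^0$--completeness of $\AST$; you instead reduce $\UHP$ to $\UAST$ directly, by observing that the program $P'$ built in the $\AST$--hardness proof is input--independent (its leading assignments and the geometric sampling of $i$ overwrite anything in the incoming valuation), so the outer $\forall\eta$ in the definition of $\UAST$ is vacuous and $P' \in \UAST$ iff $Q \in \UHP$. Both arguments are correct, and your key observation --- that the non--universal reduction already internalizes the quantification over inputs of $Q$, which is exactly why there is no complexity leap from $\AST$ to $\UAST$ --- is precisely the conceptual point the paper is making with this theorem. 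The trade--off is modularity versus self--containedness: the paper's reduction is shorter and reuses $\AST$--completeness as a black box, whereas yours must re--examine the internals of the earlier gadget (and, as you rightly note, must verify explicitly that $SQ$ and the control variables do not leak the input $\eta$ --- a point the paper's own phrasing ``on any arbitrary input $\eta$'' in the proof of Theorem~\ref{ASTisPi2comp} already supports), but in exchange it yields a hardness proof from the classical complete problem $\UHP$ without any intermediate step.
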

\begin{proof}
For showing $\UAST \in \Pi_2^0$, consider that by Theorem \ref{ASTisPi2comp} there must exist a decidable relation $\Problem R$ such that $(P,\, \eta) \in \AST$ iff $\forall \, y_1 \: \exists \, y_2\colon (y_1,\, y_2,\, P,\, \eta) \in \Problem R$.
By that we have that $P \in \UAST$ iff $\forall \, \eta\: \forall \, y_1 \: \exists \, y_2\colon (y_1,\, y_2,\, P,\, \eta) \in \Problem R$, which is a $\Pi_2^0$--formula.
It remains to show that $\UAST$ is $\Pi_2^0$--hard.
This can be done by proving $\AST \leqm \UAST$ as follows:
On input $(Q,\, \eta)$ the reduction function $f\colon \AST \leqm \UAST$ computes a probabilistic program $P$ that first initializes all variables according to $\eta$ and then executes $Q$.
\qed
\end{proof}
We now investigate the computational hardness of deciding \emph{positive} almost--sure termination:
It turns out that deciding $\PAST$ is---although still undecidable---computationally more benign than deciding $\AST$, namely it is $\Sigma_2^0$--complete.
Thus $\PAST$ becomes semi--decidable when given access to an $\HP$--oracle whereas $\AST$ does not.
We establish $\Sigma_2^0$--hardness by a reduction from $\cUHP$.
This result is particularly counterintuitive as it means that for each ordininary program that \emph{does not halt} on all inputs, we can \emph{compute} a probabilistic program that \emph{does halt} within an expected finite number of steps.
\begin{theorem}
\label{PASTisSigma2comp}
$\PAST$ is $\Sigma_2^0$--complete.
\end{theorem}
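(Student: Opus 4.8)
The plan is to prove the two halves separately: membership $\PAST \in \Sigma_2^0$ and $\Sigma_2^0$-hardness via a reduction $\cUHP \leqm \PAST$.

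For membership, I would start from the definition $\Exp{P,\eta}{\downarrow} = \sum_{k=0}^{\infty}\bigl(1 - \sum_{w \in \{L,R\}^{\leq k}} \alpha(\T_k(\sigma_{P,\eta},w))\bigr)$ and observe that $(P,\eta) \in \PAST$ iff this sum is finite, i.e. iff it is bounded by some natural number. Since every partial sum $S_y = \sum_{k=0}^{y}(1 - \sum_{w} \alpha(\T_k(\sigma_{P,\eta},w)))$ is a computable rational and the sequence $(S_y)_y$ is monotonically nondecreasing, the series converges iff $\exists c \in \bbbn\, \forall y\colon S_y \leq c$. The inner predicate ``$S_y \leq c$'' is decidable (each term is computable via $\T$ and $\alpha$ from Propositions~\ref{statesucc} and~\ref{prop:statesucc}, and the finite sum over $w \in \{L,R\}^{\leq k}$ is effective), so this is a $\Sigma_2^0$-formula ($\exists c\, \forall y\colon R$), giving $\PAST \in \Sigma_2^0$.

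For $\Sigma_2^0$-hardness I would reduce from $\cUHP$, which is $\Sigma_2^0$-complete by Theorem~\ref{UHPcomplete}. The idea is the one flagged in the paragraph before the theorem: from an ordinary program $Q$ that fails to halt on some input, construct a probabilistic program $P = f(Q)$ whose expected runtime is finite \emph{iff} $Q$ is not universally halting. The natural construction reuses the geometric-sampling gadget from the proof of Theorem~\ref{RisSigma2comp}: first sample $i \in \bbbn$ via a $\nicefrac12$-geometric loop (so input $g_Q(i)$ gets probability $\nicefrac{1}{2^{i+1}}$), then run $Q$ on $g_Q(i)$ — but this alone does not work, since if $Q$ halts on every input the program still terminates with probability one, and in fact the expected runtime could be infinite (a ``St.~Petersburg''-style blow-up) even when $Q$ is universally halting, which is the wrong direction. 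The fix is to make the sampled input's \emph{weight} shrink fast enough to dominate the runtime of $Q$ on that input: sample $i$, then run $Q$ on $g_Q(i)$ but nest the simulation so that reaching step number $k$ of the simulation of $Q$ on $g_Q(i)$ only happens after passing through a further cascade of coin flips whose depth grows with the step counter; equivalently, interleave the simulation of $Q$ with coin flips that ``kill'' the branch with probability $\nicefrac12$ before each simulated step. Then the contribution of input $g_Q(i)$ to the expected runtime is finite when $Q$ halts on $g_Q(i)$ (a convergent geometric-weighted sum), and is $+\infty$ when $Q$ diverges on $g_Q(i)$ (since that branch keeps contributing $1$ forever with a fixed positive probability $\nicefrac{1}{2^{i+1}}$ times a tail that never dies — one diverging branch with positive probability already forces $\Exp{P,\eta}{\downarrow} = \infty$). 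Hence $\Exp{P,\eta}{\downarrow} < \infty$ iff $Q$ halts on no diverging input, i.e. iff $Q \in \UHP$ — so I would actually obtain $f\colon \cUHP \leqm \cPAST$, equivalently $f\colon \UHP \leqm \overline{\PAST}$, and then take complements; I need to be careful about which direction the reduction runs and possibly compose with the universal/complement bookkeeping so that it is genuinely $\cUHP \leqm \PAST$.

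The main obstacle is getting the weights right in the hardness construction: I must ensure simultaneously that (a) a single input on which $Q$ diverges makes the expected runtime infinite, and (b) if $Q$ halts on all inputs, the doubly-indexed sum $\sum_i \sum_k (\text{weight}) \cdot (\text{still-running indicator})$ converges. Point (a) is easy because ``positive probability of nontermination'' already implies infinite expected runtime; but even ``probability-one termination'' does not imply finite expected runtime, so for (b) I cannot merely sample and simulate — I must attach a geometrically-decaying ``survival'' probability to the simulation steps so that the per-input contribution is a convergent series, and then check that summing these convergent series over the geometric distribution on $i$ still converges (which it does, since each is bounded by a constant independent of $i$ once the runtime of $Q$ on $g_Q(i)$ is finite, times $\nicefrac{1}{2^{i+1}}$). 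Verifying these two estimates carefully, and confirming that the resulting $P$ is genuinely a pGCL program expressible in the given syntax (the nested coin-flip cascade must be a fixed finite program whose loop structure produces the decaying weights), is where the real work lies; the membership half and the reduction's computability are routine.

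\qed
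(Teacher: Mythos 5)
Your membership argument ($\exists c\,\forall y\colon S_y \leq c$ with decidable inner predicate) is exactly the paper's and is fine. The hardness half, however, has a genuine and fatal gap: your reduction runs in the wrong direction, and no amount of ``complement bookkeeping'' can repair it. As you yourself compute, your construction (sample $i$ geometrically, simulate $Q$ on $g_Q(i)$) yields $\Exp{P,\eta}{\downarrow} < \infty$ iff $Q \in \UHP$, i.e.\ a reduction $\UHP \leqm \PAST$. Taking complements of a many--one reduction gives $\cUHP \leqm \cPAST$, which shows $\Sigma_2^0$--hardness of the \emph{complement} of $\PAST$ --- equivalently $\Pi_2^0$--hardness of $\PAST$ --- not the $\Sigma_2^0$--hardness you need. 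Worse, this outcome is provably impossible to combine with your own membership proof: since $\Sigma_2^0$ is closed downwards under $\leqm$, a set that is both in $\Sigma_2^0$ and $\Pi_2^0$--hard would force $\Pi_2^0 \subseteq \Sigma_2^0$, contradicting the strictness of the hierarchy. So the correlation you are aiming for (``$Q$ diverges somewhere $\Rightarrow$ infinite expected runtime'') cannot be achieved by any correct reduction; the correct reduction must realize the counterintuitive correlation ``$Q$ diverges somewhere $\Rightarrow$ \emph{finite} expected runtime.'' Your construction is also internally inconsistent on its own terms: once you interleave a $\nicefrac{1}{2}$ kill--switch before each simulated step (which you need for convergence in the universally halting case), every branch dies almost surely with expected length $O(1)$, so a diverging input contributes a \emph{finite} amount and claim (a) fails; without the kill--switch, claim (b) fails. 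The two requirements you list are genuinely irreconcilable in your setup.

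The missing idea is the paper's ``cheering'' penalty. The program simulates $Q$ \emph{sequentially on all inputs} $g_Q(0), g_Q(1), \ldots$, one step per loop iteration, with a coin toss after each iteration so that iteration $x$ survives with probability $\nicefrac{1}{2^x}$; crucially, each time $Q$ is observed to halt on the current input, the program executes $2^x$ effectless steps, so that each detected halt contributes $\nicefrac{2^x}{2^x} = 1$ (plus lower--order terms) to the expected runtime. If $Q$ halts on every input this unit contribution recurs infinitely often and the expected runtime diverges; if $Q$ diverges on some input the penalties occur only finitely often and the remainder is a convergent geometric series, so the expected runtime is finite. This is exactly $\cUHP \leqm \PAST$, with the correlation inverted relative to yours. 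Without this (or an equivalent) device for converting ``$Q$ halts on yet another input'' into a bounded--below increment of the expected runtime, the hardness direction does not go through.
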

\begin{proof}
For $\PAST \in \Sigma_2^0$ consider the following:
\normalsize\begin{align*}
~&(P,\, \eta) \in \PAST\\
\Longleftrightarrow~&\infty > \Exp{P,\eta}{\downarrow}\\
\Longleftrightarrow~&\exists \, c \colon c > \Exp{P,\eta}{\downarrow}\\
\Longleftrightarrow~&\exists \, c \colon c > \sum_{k = 0}^{\infty} \left(1-\sum_{w \in \{L,\, R\}^{{}\leq k}} ~ \alpha\big(\T_{k}(\sigma_{P,\eta},\, w) \big)\right)\\
\Longleftrightarrow~&\exists \, c \: \forall\, \ell \colon c > \sum_{k = 0}^{\ell} \left(1-\sum_{w \in \{L,\, R\}^{{}\leq k}} ~ \alpha\big(\T_{k}(\sigma_{P,\eta},\, w) \big)\right)\\
\Longrightarrow~&\PAST \in \Sigma_2^0
\end{align*}
It remains to show that $\PAST$ is $\Sigma_2^0$--hard.
For that we use a reduction function $f\colon \cUHP \leqm \PAST$ with 
$f(Q) = (P,\, \eta)$, where $\eta$ is arbitrary but fixed and $P$ is the  program
\begin{alltt}
\(c\) := 1; \(i\) := 0; \(x\) := 0; term := 0; \(\mathit{InitQ}\);

while (\(c\) \(\neq\) 0)\{
    \(\mathit{StepQ}\); 
    if (term = 1)\{
        \(\mathit{Cheer}\); \(i\) := \(i\) + 1; term := 0; \(\mathit{InitQ}\)
    \};
    \{\(c\) := 0\} [0.5] \{\(c\) := 1\}; \(x\) := \(x\) + 1
\} \(\textnormal{,}\)
\end{alltt}
where $\mathit{InitQ} \in \sonprogs$ is a program that initializes a simulation of the program $Q$ on input $g_Q(i)$ (recall the bijection $g_Q\colon \bbbn \rightarrow \mathbb V$ from Theorem \ref{RisSigma2comp}), $\mathit{StepQ} \in \sonprogs$ is a program that does one single (further) step of that simulation and sets \texttt{term} to 1 if that step has led to termination of $Q$, and $\mathit{Cheer} \in \sonprogs$ is a program that executes $2^x$ many effectless steps.
In the following we refer to this as ``cheering"\footnote{The program $P$ cheers as it was able to prove the termantion of $Q$ on input $g_Q(i)$.}.

\textit{{Correctness of the reduction:}}
Intuitively, the program $P$ starts by simulating $Q$ on input $g_Q(0)$.
During the simulation, it---figuratively speaking---gradually looses interest in further simulating $Q$ by tossing a coin after each simulation step to decide whether to continue the simulation or not.
If eventually $P$ finds that $Q$ has halted on input $g_Q(0)$, it ``cheers" for a number of steps exponential in the number of coin tosses that were made so far, namely for $2^x$ steps.
$P$ then continues with the same procedure for the next input $g_Q(1)$, and so on.

The variable $x$ keeps track of the number of loop iterations (starting from 0), which equals the number of coin tosses. 
The $x$--th loop iteration takes place with probability $\nicefrac{1}{2^x}$.
%For each iteration of the while--loop, the probability that this iteration will be executed is given by $\nicefrac{1}{2^x}$.
One loop iteration consists of a constant number of steps $c_1$ in case $Q$ did not halt on input $g_Q(i)$ in the current simulation step.
Such an iteration therefore contributes $\nicefrac{c_1}{2^x}$ to the expected runtime of the probabilistic program $P$.
In case $Q$ did halt, a loop iteration takes a constant number of steps $c_2$ plus $2^x$ additional ``cheering" steps.
Such an iteration therefore contributes $\nicefrac{c_2 + 2^x}{2^x} = \nicefrac{c_2}{2^x} + 1 > 1$ to the expected runtime.
Overall, the expected runtime of the program $P$ roughly resembles a geometric series with exponentially decreasing summands.
However, for each time the program $Q$ halts on an input, a summand of the form $\nicefrac{c_2}{2^x} + 1$ appears in this series.
There are now two cases: 

(1) $Q \in \cUHP$, so there exists some input $\eta$ with minimal $i$ such that $g_Q(i) = \eta$ on which $Q$ does not terminate.
In that case, summands of the form $\nicefrac{c_2}{2^x} + 1$ appear only $i-1$ times in the series and therefore, the series converges---the expected time until termination is finite, so $(P,\, \eta) \in \PAST$.

(2) $Q \not\in \cUHP$, so $Q$ terminates on every input.
In that case, summands of the form $\nicefrac{c_2}{2^x} + 1$ appear infinitely often in the series and therefore, the series diverges---the expected time until termination is infinite, so $(P,\, \eta) \not\in \PAST$.
%\textit{\underline{Total Correctness:}} As mentioned in the proof of Lemma \ref{RisSigmacomp}, the program code for $g_Q$ is computable.
%Also the program code for a universal program capable of simulating any program $Q$ on a given input is computable \cite{kleeneNF}.
%So in total, the program code for $P$ is computable.
\qed
\end{proof}
The final problem we study is \emph{universal} positive almost--sure termination. 
In contrast to the non--positive version, we do have a complexity leap when moving from non--universal to universal positive almost--sure termination.
We will establish that $\UPAST$ is $\Pi_3^0$--complete and thus even harder to decide than $\UAST$.
For the reduction, we make use of the following $\Pi_3^0$--complete problem:
\begin{theorem}[The Cofiniteness Problem~\textnormal{\textbf{\cite{odifreddi2}}}]
\label{COFcomp}
The \textbf{cofiniteness problem} is a subset $\boldsymbol\COF \subset \sonprogs$, which is characterized as $P \in \COF$ iff $\big\{\eta ~\big|~ (P,\, \eta) \in \HP\big\}$ is cofinite.
Let $\boldsymbol\cCOF$ denote the \textbf{complement of $\boldsymbol\COF$}, i.e.\ $\cCOF = \sonprogs \setminus \COF$.
$\COF$ is $\Sigma_3^0$--complete and $\cCOF$ is $\Pi_3^0$--complete.
\end{theorem}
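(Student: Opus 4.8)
The plan is to first place $\COF$ in $\Sigma_3^0$ and then to show it is $\Sigma_3^0$--hard; the claim that $\cCOF$ is $\Pi_3^0$--complete will then follow by complementation. For membership, recall that $(P,\, \eta) \in \HP$ is the $\Sigma_1^0$ predicate $\exists k\, \exists \eta'\colon \T_k(\sigma_{P,\eta},\, \varepsilon) = \langle {\downarrow},\, \eta',\, 1,\, \varepsilon\rangle$, which I abbreviate as $\exists z\, D(P,\, \eta,\, z)$ with $D$ decidable. Fixing a computable bijection between $\mathbb V$ and $\bbbn$ and identifying each $\eta$ with its index, cofiniteness of $\{\eta \mid (P,\, \eta) \in \HP\}$ means finiteness of its complement, i.e.\ the existence of a bound beyond which every input halts. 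Hence
\begin{align*}
P \in \COF
&~\Longleftrightarrow~ \exists N\, \forall \eta\colon \big(\eta \geq N \rightarrow (P,\, \eta) \in \HP\big)\\
&~\Longleftrightarrow~ \exists N\, \forall \eta\, \exists z\colon \big(\eta < N \,\vee\, D(P,\, \eta,\, z)\big)~,
\end{align*}
a $\Sigma_3^0$--formula with decidable matrix. Negating it gives $P \in \cCOF \Longleftrightarrow \forall N\, \exists \eta\, \forall z\colon \big(\eta \geq N \wedge \neg D(P,\, \eta,\, z)\big)$, a $\Pi_3^0$--formula, so $\cCOF \in \Pi_3^0$.

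For $\Sigma_3^0$--hardness I would reduce an arbitrary $A \in \Sigma_3^0$, written $x \in A \Longleftrightarrow \exists u\, C(x,\, u)$, where $C(x,\, u)$ abbreviates the $\Pi_2^0$ condition $\forall v\, \exists w\, R(x,\, u,\, v,\, w)$ with $R$ decidable. For each $u$ define the computable, nondecreasing progress length $\ell(x,\, u,\, s)$ to be the largest $V$ such that for every $v < V$ there is some $w \leq s$ with $R(x,\, u,\, v,\, w)$; then $\lim_{s \to \infty} \ell(x,\, u,\, s) = \infty$ iff $C(x,\, u)$ holds, and the limit is finite otherwise. The reduction maps $x$ to (an index of) an ordinary program $P_x \in \sonprogs$, uniformly computable from $x$, whose set of halting inputs I design to be cofinite exactly when some disjunct $C(x,\, u)$ holds.

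The program $P_x$ runs a priority/movable--marker construction over the input space (again identified with $\bbbn$): each index $u$ owns a block of inputs, with smaller $u$ having higher priority. Whenever a stage reveals that $\ell(x,\, u,\, \cdot)$ has increased, index $u$ permanently captures (declares halting) further inputs and pushes the blocks of all lower--priority indices $u' > u$ outward, so that a genuinely successful index eventually captures an entire tail of $\bbbn$; on input $n$, $P_x$ simulates this bookkeeping and halts iff $n$ is ever captured. The intended behaviour is as follows. If some $C(x,\, u)$ holds, let $u^\ast$ be least such; the only indices that can ever push $u^\ast$ are the finitely many $u < u^\ast$, each unsuccessful and hence active only finitely often, so $u^\ast$ eventually stabilises and, progressing unboundedly, captures all but finitely many inputs, making the halting set cofinite and $P_x \in \COF$. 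If no $C(x,\, u)$ holds, every index progresses only boundedly, and the priority discipline forces the non--halting inputs to remain infinite, so $P_x \notin \COF$. This yields a reduction witnessing $A \leqm \COF$, and thus $\Sigma_3^0$--hardness.

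The main obstacle is the verification of the failure direction: one must rule out a \emph{diagonal conspiracy} in which no single index progresses unboundedly yet infinitely many distinct indices each contribute a little. As the naive construction that simply fills up to $\max_{u \leq s} \ell(x,\, u,\, s)$ shows, such a conspiracy can spuriously cover the whole input space, so cofiniteness genuinely sits above the $\Pi_2^0$ level and an initial--segment construction cannot work. The priority and outward--pushing mechanism is exactly what defeats this: capture requires one index's \emph{own} progress to grow, and the pushing guarantees that the inputs conceded by each finitely--progressing index leave an infinite residue uncovered whenever no index succeeds. Making this invariant precise---stabilisation of each priority level and infinitude of the complement under failure---is the technical heart of the argument. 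Finally, since $\COF$ is $\Sigma_3^0$--complete and many--one reductions dualise under complementation, for any $B \in \Pi_3^0$ we have $\overline B \in \Sigma_3^0$ and a reduction $g$ with $g\colon \overline B \leqm \COF$, whence $g\colon B \leqm \cCOF$; together with $\cCOF \in \Pi_3^0$ this shows that $\cCOF$ is $\Pi_3^0$--complete.
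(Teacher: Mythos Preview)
The paper does not prove this theorem; it is quoted from Odifreddi's textbook as a classical fact of recursion theory and used only as a black box in the subsequent reduction $\cCOF \leqm \UPAST$. So there is no in--paper proof to compare against.

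Your membership argument is correct and standard. Your hardness sketch is the classical movable--marker construction (as in Soare or Odifreddi): write an arbitrary $\Sigma_3^0$ set as $\exists u\, \forall v\, \exists w\, R$, measure the progress of each disjunct $u$ by the computable nondecreasing $\ell(x,u,s)$, and build $P_x$ so that inputs are held hostage by markers that are released (declared halting) and pushed outward whenever some $\ell(x,u,\cdot)$ increments. The two--case analysis you give is the right one, and you correctly identify that the failure direction is where the content lies. What your sketch does not yet discharge is the precise invariant: at every stage $s$ the current marker positions $m_0^s < m_1^s < \cdots$ are all non--halting inputs, $m_u^s$ moves only when some $\ell(x,u',\cdot)$ with $u' \leq u$ increments, and hence if every $\ell(x,u,\cdot)$ has finite limit then each $m_u^s$ stabilises to a final value $m_u^\infty$, yielding infinitely many permanently non--halting inputs. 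Once that invariant is stated and checked (which is routine), your argument is exactly the textbook one the paper is citing.
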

%
%
%$\Pi_3^0$--completeness of $\cCOF$ follows from the fact that the complement of a $\Sigma_n^0$--complete problem is necessarily $\Pi_n^0$--complete.
%By reduction from $\cCOF$ we establish in the following hardness result for $\UPAST$:
%
%
\begin{theorem}
\label{UPASTisPi3comp}
$\UPAST$ is $\Pi_3^0$--complete.
\end{theorem}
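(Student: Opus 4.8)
The plan is to establish $\UPAST \in \Pi_3^0$ and $\Pi_3^0$-hardness separately.

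For membership, I would start from the $\Sigma_2^0$ characterization of $\PAST$ obtained in Theorem~\ref{PASTisSigma2comp}: there is a decidable relation $\Problem R$ with $(P,\,\eta) \in \PAST$ iff $\exists c\,\forall \ell\colon (c,\,\ell,\,P,\,\eta)\in\Problem R$. Then $P \in \UPAST$ iff $\forall\eta\,\exists c\,\forall\ell\colon (c,\,\ell,\,P,\,\eta)\in\Problem R$, which after contracting the two leading universal quantifiers is a $\Pi_3^0$-formula. This direction is routine.

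The substance is $\Pi_3^0$-hardness, which I would prove by reducing $\cCOF$ to $\UPAST$ (Theorem~\ref{COFcomp}). Given an ordinary program $Q$, I want to build a probabilistic program $P$ such that $P \in \UPAST$ iff the halting set of $Q$ is \emph{not} cofinite, i.e.\ iff infinitely many inputs make $Q$ diverge. The natural idea is to reuse the ``cheering'' construction from the proof of Theorem~\ref{PASTisSigma2comp}, but now parametrized by the input $\eta$ so that, for a fixed $\eta$, the program walks through inputs $g_Q(i)$ in some $\eta$-dependent order (or: uses $\eta$ to pick an infinite recursive set of indices to enumerate). The key design goal is: for each input $\eta$, the program $P$ on $\eta$ is PAST iff only \emph{finitely many} of the indices it enumerates are halting indices of $Q$; and we need this to hold for \emph{all} $\eta$ simultaneously exactly when $Q$'s halting set is coinfinite. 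Concretely, I would have $P$ on input $\eta$ first compute from $\eta$ a starting offset (say $m = \eta(v_0)$ for a designated variable), then run the same loop as in Theorem~\ref{PASTisSigma2comp} but enumerating $g_Q(m),\,g_Q(m+1),\,g_Q(m+2),\dots$, cheering $2^x$ steps each time $Q$ halts on one of these. As in the earlier proof, the expected runtime is finite iff only finitely many of $g_Q(m),\,g_Q(m+1),\dots$ are halting inputs. This holds for every offset $m$ simultaneously iff the halting set of $Q$ contains only finitely many inputs altogether among $\{g_Q(j) : j \geq m\}$ for all $m$ — but since $g_Q$ is a bijection onto $\mathbb V$, ``finitely many halting inputs among $g_Q(j), j\geq m$, for every $m$'' is equivalent to ``finitely many halting inputs in total'', i.e.\ $Q\in\COF$, which is the wrong direction.

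So the construction needs one more twist: I would instead make $P$ on input $\eta$ enumerate \emph{only the single input} $g_Q(\eta(v_0))$ repeatedly — no, that gives a $\Pi_2^0$ set. The right move is to have $P$ on $\eta$ enumerate the \emph{tail} but in a way that forces the bad case to be ``cofinitely many halting inputs''. Cleanest: let $P$ on $\eta$ (with $m=\eta(v_0)$) run the cheering loop enumerating $g_Q(m),\,g_Q(m+1),\dots$ but cheer $2^x$ steps whenever $Q$ \emph{fails to have halted within $x$ steps} is impossible to detect; instead cheer when $Q$ \emph{does} halt, and observe: $P$ is PAST on offset $m$ iff $\{j\geq m : Q \text{ halts on } g_Q(j)\}$ is finite; $P\in\UPAST$ iff this holds for all $m$ iff $\{j : Q\text{ halts on }g_Q(j)\}$ is finite iff $Q$'s halting set is finite. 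Since ``halting set finite'' is $\Sigma_2^0$, this is still not $\Pi_3^0$-hard. The genuine fix, which I expect to be the main obstacle, is to build $P$ so that on input $\eta$ it diagonalizes over \emph{a subset of inputs indexed by $\eta$} chosen so that $P\in\UPAST$ becomes $\forall n\,\exists m\,\forall k\colon(\dots)$-shaped matching $\cCOF$'s ``for all $n$, there is an $m\geq n$ with $Q$ diverging on $g_Q(m)$''. I would achieve this by having $P$ on $\eta=n$ search for the \emph{least} $m\geq n$ on which $Q$ diverges, simulating $Q$ on $g_Q(n),\,g_Q(n+1),\dots$ dovetailed, and cheering $2^x$ steps \emph{each time it confirms $Q$ halts on the current candidate and moves to the next}; then $P$ on $n$ is PAST iff it eventually gets stuck on a diverging input $g_Q(m)$, $m\geq n$ (finitely many cheers), and $P\in\UPAST$ iff for every $n$ such an $m$ exists, i.e.\ iff $Q$ diverges on infinitely many inputs, i.e.\ $Q\in\cCOF$. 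Verifying that the expected runtime on each $n$ is exactly finite-iff-the-search-terminates requires the same geometric-series bookkeeping as in Theorem~\ref{PASTisSigma2comp} (the $2^x$ cheering steps against the $\nicefrac{1}{2^x}$ iteration probability), plus care that the unbounded dovetailed simulation before finding the diverging input still contributes only finitely much expected runtime when the search does terminate — this last point, ensuring the pre-cheer simulation work is dominated, is where I would spend the most attention, likely by interleaving a coin toss (as in the earlier proof) after every simulation step so that reaching step $x$ has probability $\nicefrac{1}{2^x}$ regardless of how much simulation happens.
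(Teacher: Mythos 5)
Your proposal is correct and, after the exploratory false starts, lands on essentially the same construction as the paper: reuse the cheering program of Theorem~\ref{PASTisSigma2comp} but let the input $\eta$ determine the starting index of the sequential (blocking) enumeration of $Q$'s inputs, so that $P$ is PAST on $\eta$ exactly when the enumeration eventually gets stuck on a diverging input $g_Q(m)$ with $m\geq\eta(i)$, whence $P\in\UPAST$ iff $Q$ diverges on infinitely many inputs, i.e.\ $Q\in\cCOF$; the membership argument is also identical. (Only the word ``dovetailed'' is misleading---a genuine dovetailing would never get stuck and would give the wrong characterization---but your operational description of advancing to the next candidate only after confirming a halt is the correct, sequential one.)
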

\begin{proof}
By Theorem \ref{PASTisSigma2comp}, there exists a decidable relation $\Problem R$, such that $(P,\, \eta) \in \PAST$ iff $\exists \, y_1 \: \forall\, y_2\colon (y_1,\, y_2,\, P,\, \eta) \in \Problem R$.
Therefore $\UPAST$ is definable by $P \in \UPAST$ iff $\forall\, \eta \: \exists \, y_1 \: \forall\, y_2\colon (y_1,\, y_2,\, P,\, \eta) \in \Problem R$ which gives a $\Pi_3^0$--formula.

It remains to show that $\UPAST$ is $\Pi_3^0$--hard.
For that we many--one reduce the $\Pi_3^0$--complete complement of the cofiniteness problem to $\UPAST$ using the following function $f\colon \cCOF \leqm \UPAST$: $f$ takes an ordinary program $Q$ as its input and computes the following probabilistic program $P$:
\begin{alltt}
\(c\) := 1; \(x\) := 0; term := 0; \(\mathit{InitQ}\);
while (\(c\) \(\neq\) 0)\{
    \(\mathit{StepQ}\); if (term = 1)\{\(\mathit{Cheer}\); \(i\) := \(i\) + 1; term := 0; \(\mathit{InitQ}\)\}
    \(x\) := \(x\) + 1; \{\(c\) := 0\} [0.5] \{\(c\) := 1\}
\},
\end{alltt}
where $\mathit{InitQ}$ is a program that initializes a simulation of the program $Q$ on input $g_Q(i)$ (recall the bijection $g_Q\colon \bbbn \rightarrow \mathbb V$ from Theorem \ref{RisSigma2comp}), $\mathit{StepQ}$ is a program that does one single (further) step of that simulation and sets \texttt{term} to 1 if that step has led to termination of $Q$, and $\mathit{Cheer}$ is a program that executes $2^x$ many effectless steps.

\textit{{Correctness of the reduction:}}
The only difference to the program from the proof of Theorem \ref{PASTisSigma2comp} is that the variable $i$ is not initialized with 0.
Thus, on input $\eta$ the program $P$ also simulates $Q$ succesively on all inputs but starting from input $g_Q\big(\eta(i)\big)$ instead of $g_Q(0)$.

The problem $\cCOF$ can alternatively be defined as $Q \in \cCOF$ iff $\big\{\eta ~|~ (Q,\, \eta) \in \cHP\}$ is infinite.
There are now two cases:

(1) $Q \in \cCOF$. Thus, there are infinitely many inputs on which $Q$ does not terminate.
So no matter with which number $\eta(i) \in \mathbb N$ the variable $i$ is initialized, the variable $i$ will eventually be incremented to some value $j$ such that $Q$ does not terminate on $g_Q(j)$ and therefore, in the while--loop the \texttt{if}--branch with the ``cheering" steps will eventually \emph{not} be executed anymore.
Consequently, the expected time until termination of $P$ on any input $\eta$ is finite and therefore $P \in \UPAST$.

(2) $Q \not\in \cCOF$. Then there are only finitely many inputs on which $Q$ does not terminate.
Say $j$ is minimal such that $Q$ does not terminate on input $g_Q(j)$, i.e.\ the program $Q$ terminates on all other inputs $g_Q(i')$ with $j < i'$.
So when running the program $P$ on some input $\eta$ with $\eta(i) > j$, in the while--loop the \texttt{if}--branch with the ``cheering" steps will be executed infinitely often.
Consequently, the expected time until termination of $P$ on that input $\eta$ is infinite and therefore $P \not\in \UPAST$.
%
%By Theorem \ref{COFcomp}, $\cCOF$ is $\Pi_3^0$--complete, so for any $\Problem{A} \in \Pi_3^0$ it holds that $\Problem A \leqm \cCOF$.
%Since we have just proven that $\cCOF \leqm \UPAST$, it follows that $\Problem A \leqm \cCOF \leqm \UPAST$, and by transitivity $\Problem A \leqm \UPAST$. 
\qed
\end{proof}

\section{Conclusion}
\begin{figure}[t!]
\begin{center}
\begin{tikzpicture}
\pgftransformscale{.947}
\draw[white, use as bounding box] (-4.85,.5) rectangle (4.85,7.5);
%%% HELP LINES - uncomment to design/extend
% \draw[step=1cm,gray,very thin] (-10,0) grid (10,12);
% \node at (0,0) {\textbf{(0,0)}};

%% Horizontal bar
\draw[very thick] (-3,0) -- (3,0);
\draw[very thick] (-3,0) -- (-3,7);
\draw[very thick] (3,0) -- (3,7);
\draw[very thick, dotted] (-3, 7) -- (-3, 7.5);
\draw[very thick, dotted] (3, 7) -- (3, 7.5);

% LOG TIME
% LOG TIME

\node at (-2.575, 1.7) {$\boldsymbol{\Sigma_1^0}$};
\node at (2.575, 1.7) {$\boldsymbol{\Pi_1^0}$};
\node at (0, 1.2) {$\boldsymbol{\Delta_1^0}$};
\node at (-2.575, 3.7) {$\boldsymbol{\Sigma_2^0}$};
\node at (2.575, 3.7) {$\boldsymbol{\Pi_2^0}$};
\node at (0, 3.2) {$\boldsymbol{\Delta_2^0}$};
\node at (-2.575, 5.7) {$\boldsymbol{\Sigma_3^0}$};
\node at (2.575, 5.7) {$\boldsymbol{\Pi_3^0}$};
\node at (0, 5.2) {$\boldsymbol{\Delta_3^0}$};

\draw (-3,6) parabola bend (-2,6) (3,4) ;
\draw (-3,4) parabola bend (2,6) (3,6) ;

\draw (-3,4) parabola bend (-2,4) (3,2) ;
\draw (-3,2) parabola bend (2,4) (3,4) ;

\draw (-3,2) parabola bend (-2,2) (3,0) ;
\draw (-3,0) parabola bend (2,2) (3,2) ;

\node at (0, 6.5) {\large $\vdots$};

\node at (-2.5, .9) {\textcolor{gray}{\scalebox{.75}{$\HP$}}};
\node at (2.5, .9) {\textcolor{gray}{\scalebox{.75}{$\cHP$}}};
\node at (-2.5, 2.95) {\textcolor{gray}{\scalebox{.75}{$\cUHP$}}};
\node at (2.6, 2.85) {\textcolor{gray}{\scalebox{.75}{$\UHP$}}};
\node at (-2.5, 4.95) {\textcolor{gray}{\scalebox{.75}{$\COF$}}};
\node at (2.6, 4.85) {\textcolor{gray}{{\scalebox{.75}{$\cCOF$}}}};

{\node at (-1.6, 1.6) {\scriptsize{$\LEXP$}};}
{\node at (-4, 1) {\tiny \begin{tabular}{r}{semi--decidable}\end{tabular}};}
{\node at (0, 0.55) {\tiny \begin{tabular}{r}{decidable}\end{tabular}};}

{\node at (-1.3, 3.7) {\scriptsize{$\PAST$}};}
{\node at (-1.7, 3.35) {\scriptsize{$\REXP$}};}
{\node at (-4, 3) {\tiny \begin{tabular}{r}{with access to}\\{$\HP$--oracle:}\\{semi--decidable}\end{tabular}};}

{\node at (2.3, 3.2) {\scriptsize{$\EXP$}};}
{\node at (1.1, 3.7) {\scriptsize{$\AST$}};}
{\node at (4, 3) {\tiny \begin{tabular}{l}{not}\\ {semi--decidable;}\\{even with}\\{access to}\\{$\HP$--oracle}\end{tabular}};}
{\node at (4, 5) {\tiny \begin{tabular}{l}{not}\\ {semi--decidable;}\\{even with}\\{access to}\\{$\UHP$--oracle}\end{tabular}};}
{\node at (1.8, 3.5) {\scriptsize{$\UAST$}};}
{\node at (1.6, 5.6) {\scriptsize{$\UPAST$}};}

%{\node at (0,7.5) {\normalsize \begin{tabular}{c}\emph{\textbf{Thank you for}}\\\emph{\textbf{your kind attention :-)}}\end{tabular}};}
\end{tikzpicture}
\end{center}
\caption{The complexity landscape of determining expected outcomes and deciding (universal) (positive) almost--sure termination.}
%\caption{}
\label{fighier}
\end{figure}
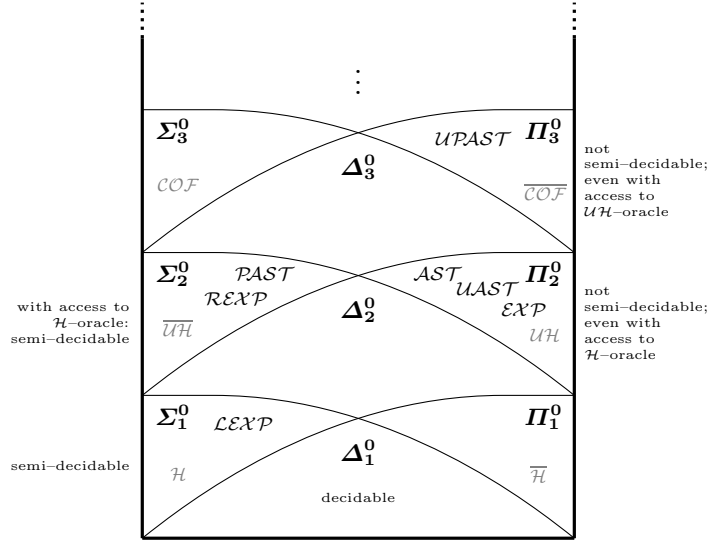
We have studied the computational complexity of solving a variety of natural problems which appear in the analysis of probabilistic programs: Computing lower bounds, upper bounds, and exact expected outcomes  ($\LEXP$, $\REXP$, and $\EXP$), deciding non--universal and universal almost--sure termination ($\AST$ and $\UAST$), and deciding non--universal and universal positive almost--sure termination ($\PAST$ and $\UPAST$).
Our complexity results are summarized in Figure~\ref{fighier}.
All examined problems are complete for their respective level of the arithmetical hierarchy.

Future work consists of identifying program subclasses for which some of the studied problems become easier.
One idea towards this would be to investigate the use of quantifier--elimination methods such as e.g.\ Skolemization.

\subsubsection*{Acknowledgements}
The authors would like to thank Luis Mar\'{i}a Ferrer Fioriti (Saarland University) and Federico Olmedo (RWTH Aachen) for the fruitful discussions on the topics of this paper.

\bibliographystyle{splncs}
\bibliography{literature}

\begin{thebibliography}{10}

\bibitem{DBLP:journals/jcss/Kozen81}
Kozen, D.:
\newblock {S}emantics of {P}robabilistic {P}rograms.
\newblock J. Comput. Syst. Sci. \textbf{22}(3) (1981)  328--350

\bibitem{DBLP:journals/toplas/BartheKOB13}
Barthe, G., K{\"{o}}pf, B., Olmedo, F., B{\'{e}}guelin, S.Z.:
\newblock Probabilistic {R}elational {R}easoning for {D}ifferential {P}rivacy.
\newblock {ACM} Trans. Program. Lang. Syst. \textbf{35}(3) (2013) ~9

\bibitem{DBLP:journals/corr/BorgstromGGMG13}
Borgstr{\"o}m, J., Gordon, A., Greenberg, M., Margetson, J., {van Gael}, J.:
\newblock {M}easure {T}ransformer {S}emantics for {B}ayesian {M}achine
  {L}earning.
\newblock LMCS \textbf{9}(3) (2013)

\bibitem{mciver}
McIver, A., Morgan, C.:
\newblock {A}bstraction, {R}efinement and {P}roof for {P}robabilistic
  {S}ystems.
\newblock Springer (2004)

\bibitem{gretz}
Gretz, F., Katoen, J.P., McIver, A.:
\newblock {O}perational versus {W}eakest {P}re--{E}xpectation {S}emantics for
  the {P}robabilistic {G}uarded {C}ommand {L}anguage.
\newblock Performance Evaluation \textbf{73} (2014)  110--132

\bibitem{luis}
Fioriti, L.M.F., Hermanns, H.:
\newblock Probabilistic termination: Soundness, completeness, and
  compositionality.
\newblock In: {POPL} 2015, {ACM} (2015)  489--501

\bibitem{sneyers}
Sneyers, J., {de Schreye}, D.:
\newblock {P}robabilistic {T}ermination of {CHRiSM} {P}rograms.
\newblock In: LOPSTR. Volume 7225 of LNCS., Springer (2011)  221--236

\bibitem{arons2003parameterized}
Arons, T., Pnueli, A., Zuck, L.D.:
\newblock {P}arameterized {V}erification by {P}robabilistic {A}bstraction.
\newblock In: FoSSaCS. Volume 2620 of LNCS., Springer (2003)  87--102

\bibitem{esparza}
Esparza, J., Gaiser, A., Kiefer, S.:
\newblock {P}roving {T}ermination of {P}robabilistic {P}rograms {U}sing
  {P}atterns.
\newblock In: CAV. Volume 7358 of LNCS., Springer (2012)  123--138

\bibitem{onlymorgan}
Morgan, C.:
\newblock Proof {R}ules for {P}robabilistic {L}oops.
\newblock In: Proc.\ of the BCS-FACS 7th Refinement Workshop, Workshops in
  Computing, Springer Verlag (1996) ~7

\bibitem{tiomkin}
Tiomkin, M.L.:
\newblock Probabilistic {T}ermination {V}ersus {F}air {T}ermination.
\newblock TCS \textbf{66}(3) (1989)  333--340

\bibitem{kleeneNF}
Kleene, S.C.:
\newblock {R}ecursive {P}redicates and {Q}uantifiers.
\newblock Trans.\ of the AMS \textbf{53}(1) (1943)  41 -- 73

\bibitem{odifreddi1}
Odifreddi, P.:
\newblock {C}lassical {R}ecursion {T}heory: {T}he {T}heory of {F}unctions and
  {S}ets of {N}atural {N}umbers.
\newblock Elsevier (1992)

\bibitem{post44}
Post, E.L.:
\newblock {R}ecursively {E}numerable {S}ets of {P}ositive {I}ntegers and their
  {D}ecision {P}roblems.
\newblock Bulletin of the AMS \textbf{50}(5) (1944)  284--316

\bibitem{davis}
Davis, M.D.:
\newblock {C}omputability, {C}omplexity, and {L}anguages: {F}undamentals of
  {T}heoretical {C}omputer {S}cience.
\newblock Academic Press (1994)

\bibitem{odifreddi2}
Odifreddi, P.:
\newblock {C}lassical {R}ecursion {T}heory, {V}olume {II}.
\newblock Elsevier (1999)

\end{thebibliography}

\clearpage
\appendix

\end{document}